\newtheorem{lemma}{Lemma}
\newtheorem{theorem}{Theorem}
\newtheorem{corollary}{Corollary}
\newtheorem{definition}{Definition}
\newenvironment{proof}{\par\emph{Proof: }}{\hfill$\blacksquare$\newline}
\newcommand*{\USC}{\mathop{}\!\mathrm{USC}}
\newcommand*{\LSC}{\mathop{}\!\mathrm{LSC}}
\newcommand*{\BUC}{\mathop{}\!\mathrm{BUC}}
\begin{document}

\title{Beyond Gaussian Assumptions: A General Fractional HJB Control Framework for 
Lévy-Driven Heavy-Tailed Channels in 6G}

\author{Mengqi Li,~\IEEEmembership{Student Member,~IEEE}, Lixin Li,~\IEEEmembership{Member,~IEEE}, Wensheng Lin,~\IEEEmembership{Member,~IEEE}, \\
Zhu Han,~\IEEEmembership{Fellow,~IEEE}, and Tamer Başar,~\IEEEmembership{Life Fellow,~IEEE}
\thanks{This paper has been accepted for publication in the IEEE Transactions on Wireless Communications with DOI: \href{https://doi.org/10.1109/TWC.2025.3631903}{10.1109/TWC.2025.3631903}.

Corresponding authors: Lixin Li, Wensheng Lin.
	
	This work was supported in part by National Natural Science Foundation of China under Grants 62571450 and 62101450, in part by Key Research and Development Program of Shaanxi under Grant 2025CY-YBXM-043, in part by the Young Elite Scientists Sponsorship Program by the China Association for Science and Technology under Grant 2022QNRC001, in part by Aeronautical Science Foundation of China under Grants 2022Z021053001 and 2023Z071053007, in part by the Open Fund of Intelligent Control Laboratory, in part by the Open Fund of Key Laboratory of Radio Spectrum Testing Technology (The State Radio\_monitoring\_center Testing Center), Ministry of Industry and Information Technology, in part by AFOSR Grant FA9550-24-1-0152, in part by NSF ECCS-2302469, Amazon and Japan Science and Technology Agency (JST) Adopting Sustainable Partnerships for Innovative Research Ecosystem (ASPIRE) JPMJAP2326.}
\thanks{Mengqi Li, Lixin Li, Wensheng Lin are
with the School of Electronics and Information, Northwestern Polytechnical
University, Xi’an, Shaanxi 710129, China (e-mail: mengqili@mail.nwpu.edu.cn; lilixin@nwpu.edu.cn; linwest@nwpu.edu.cn).}
\thanks{Zhu Han is with the Department of Electrical and Computer Engineering at the University of Houston, Houston, TX 77004 USA, and also with the Department of Computer Science and Engineering, Kyung Hee University, Seoul, South Korea, 446-701 (e-mail: hanzhu22@gmail.com).}
\thanks{Tamer Başar is with the Department of Electrical and Computer Engineering, University of Illinois Urbana-Champaign, Urbana, IL 61801 USA (e-mail: basar1@illinois.edu).}
}

\markboth{}%
{}


\maketitle

\begin{abstract}
Emerging 6G wireless systems suffer severe performance degradation in challenging environments like high-speed trains traversing dense urban corridors and Unmanned Aerial Vehicles (UAVs) links over mountainous terrain. These scenarios exhibit non-Gaussian, non-stationary channels with heavy-tailed fading and abrupt signal fluctuations. To address these challenges, this paper proposes a novel wireless channel model based on symmetric $\alpha$-stable Lévy processes, thereby enabling continuous-time state-space characterization of both long-term and short-term fading. Building on this model, a generalized optimal control framework is developed via a fractional Hamilton-Jacobi-Bellman (HJB) equation that incorporates the Riesz fractional operator to capture non-local spatial effects and memory-dependent dynamics. The existence and uniqueness of viscosity solutions to the fractional HJB equation are rigorously established, thus ensuring the theoretical validity of the proposed control formulation. Numerical simulations conducted in a multi-cell, multi-user downlink setting demonstrate the effectiveness of the fractional HJB-based strategy in optimizing transmission power under heavy-tailed co-channel and multi-user interference.
\end{abstract}

\begin{IEEEkeywords}
Non-Gaussian channel, symmetric $\alpha$-stable Lévy processes, fractional Hamilton-Jacobi-Bellman, the Riesz fractional operator, viscosity solutions, the Perron’s method
\end{IEEEkeywords}

\section{Introduction}
The sixth-generation (6G) wireless communication network aims to support hyper-reliable low-latency communications (HRLLC) across diverse deployment scenarios~\cite{r1,r2,r3}. However, the widespread deployment of high-mobility platforms—including aircraft, unmanned aerial vehicles (UAVs), and high-speed trains—introduces fundamental challenges to accurate analytical channel modeling and robust control design. Wireless links in such environments experience heavy-tailed fading and abrupt signal variations due to heterogeneous propagation conditions and impulsive interference~\cite{R1,Basar}, violating the classical assumptions of Gaussianity and stationarity. As a result, traditional wireless models become inadequate for characterizing wireless channel dynamics in 6G HRLLC scenarios~\cite{r4}.

Traditional fading models decompose channel variations into long-term and short-term components. Long-term fading, induced by terrain and structural obstructions, is typically modeled using log-normal distributions, while short-term fading due to multipath propagation is often characterized by Rayleigh or Rician models~\cite{FadingChannel}. To address temporal variability, Charalambous and Menemenlis introduced stochastic differential equation-based models, where long-term fading follows a mean-reverting Ornstein-Uhlenbeck (O-U) process and short-term fading is represented by envelope dynamics governed by square-root diffusions~\cite{r32,r33}. Although effective in quasi-static environments, these models struggle to capture the impulsive, heavy-tailed behaviors observed in highly dynamic and mobile scenarios, thus potentially limiting their applicability under non-Gaussian, non-stationary interference~\cite{r11,r7}.

To address these limitations, a more sophisticated stochastic framework is required. The noise and interference in challenging wireless environments are often impulsive rather than purely thermal, originating from sources like man-made electronics, atmospheric discharges, or sudden multipath configuration changes. This impulsiveness means the underlying statistical distribution is heavy-tailed—that is, the probability of observing random events of very large amplitude is significantly higher than what a Gaussian distribution would predict. The $\alpha$-stable distribution is a generalization of the Gaussian distribution, characterized by a stability index $\alpha\in(0,2]$ that governs the heaviness of its tails. While $\alpha=2$ recovers the Gaussian case, values of $\alpha<2$ generate distributions with infinite variance, making them highly suitable for modeling impulsive phenomena ~\cite{R2,R3}. However, a static distribution is insufficient because channel noise evolves over time. A Lévy process provides the necessary temporal dynamics~\cite{R4}. 

By driving a Lévy process with a Symmetric $\alpha$-stable (S$\alpha$S) distribution, we obtain an S$\alpha$S Lévy process, which models a system that exhibits not only heavy-tailed characteristics at any given moment but also evolves through sudden, discontinuous jumps over time. This makes it an exceptionally well-suited framework for representing both the statistical properties and the dynamic behavior of impulsive noise in wireless channels. In the wireless domain, Liu et al. proposed a GLRT-based detector for target detection in sub-Gaussian symmetric $\alpha$-stable clutter by expressing the detection statistic using Fox's $H$-function~\cite{GLRT}, while Zhang et al. developed a transmit antenna identification algorithm for MIMO cognitive radio systems under $\alpha$-stable noise using fractional lower-order statistics~\cite{MIMO}. Building on this foundation, this study proposes a wireless channel model based on S$\alpha$S Lévy processes, aiming to extend the dynamics into a continuous-time state-space formulation and jointly capture long-term and short-term fading under non-stationary propagation conditions.

While such modeling provides an accurate statistical description of non-Gaussian fading, it remains insufficient to ensure stable communication performance under interference and uncertainty. Abrupt fluctuations caused by $\alpha$-stable jumps can degrade system stability, particularly in multi-user networks with shared spectral resources. These challenges necessitate a robust control framework capable of real-time adaptation through strategies such as power allocation, scheduling, and beamforming~\cite{R7,R8,r17}. Dynamic programming (DP) offers a structured methodology for such sequential decision problems. For continuous-time systems, the cornerstone of DP is the Hamilton-Jacobi-Bellman (HJB) equation, a non-linear partial differential equation that provides a necessary and sufficient condition for optimality~\cite{r18,r19,r20}. The solution to the HJB equation is the value function, which quantifies the optimal expected cost-to-go from any given state and time. By solving for this value function, one can derive the optimal control policy for the entire system. However, the structure of the classical HJB equation, specifically its reliance on second-order differential operators, is fundamentally tied to the assumption of underlying Gaussian noise. This makes it unsuitable for accounting for the non-local and infinite-variance behavior induced by Lévy processes~\cite{Jump}.

To bridge this gap, this paper introduces a fractional HJB equation, replacing the standard Laplacian with a Riesz fractional derivative. This change is a direct consequence of the underlying stochastic model. In optimal control, the HJB equation includes a term derived from the infinitesimal generator of the stochastic process driving the system. The standard Laplacian—a second-order, local operator—is the well-known generator for Gaussian/Brownian motion. Correspondingly, for a system driven by a symmetric $\alpha$-stable Lévy process, the generator is the Riesz fractional operator, which is a non-local, integral operator. As the infinitesimal generator of symmetric $\alpha$-stable (S$\alpha$S) motions, the Riesz operator provides a rigorous framework for modeling non-local spatial interactions and power-law memory effects—features intrinsic to $\alpha$-stable Lévy dynamics~\cite{r21,r22,r23}. Its mathematical non-locality provides a formal mechanism to account for the physical possibility of the channel state making large, instantaneous jumps, thereby ensuring the control policy considers such drastic changes in system conditions. This formulation provides a framework for designing control strategies aimed at optimizing performance in wireless systems subject to heavy-tailed and discontinuous stochastic variations.

However, the inclusion of fractional operators and Lévy-driven jumps poses analytical challenges, as the resulting value function may lack classical differentiability~\cite{r24}. To address this, the framework of viscosity solutions is adopted, originally introduced by Lions for fully nonlinear partial differential equations~\cite{r25,r26,r27}. This approach ensures mathematical well-posedness through two key techniques: the comparison principle~\cite{r28}, which establishes the ordering of subsolutions and supersolutions, and the Perron’s method~\cite{r29}, which constructs a unique solution as the supremum of admissible subsolutions. These tools collectively guarantee the existence and uniqueness of solutions to the fractional HJB equation and validate the derived control strategies under non-Gaussian channel dynamics.

In summary, this study addresses the critical need for advanced modeling and control in 6G by presenting a unified framework with a clear logical progression. First, a novel Lévy-driven channel model is introduced that is well-suited to capturing the heavy-tailed, non-stationary characteristics of challenging wireless environments. Second, based on this physically-motivated model, the corresponding fractional HJB control framework is derived, where the Riesz fractional operator arises as the appropriate generator for the jump dynamics. Finally, viscosity solution theory is employed to ensure the resulting optimization problem is mathematically robust and well-posed. These contributions provide a unified approach for modeling and controlling wireless systems under stochastic, non-Gaussian fading, and establish a foundation for future research in fractional-order control and wireless system optimization.

This study makes the following contributions:
\begin{itemize}
    \item A novel Lévy-driven wireless channel model is developed based on S$\alpha$S processes, which extends classical fading models into a continuous-time state-space framework and accurately captures both long-term and short-term fading under non-Gaussian and non-stationary propagation conditions.
    
    \item A generalized fractional HJB control framework is formulated, incorporating Riesz fractional operators to model non-local spatial interactions and memory-dependent dynamics induced by impulsive Lévy-driven fading. This framework provides a robust foundation for optimal control in stochastic, heavy-tailed wireless environments.
    
    \item The existence and uniqueness of viscosity solutions to the fractional HJB equation are rigorously established, leveraging comparison principles and the Perron’s method. This guarantees the mathematical well-posedness of the control problem under fractional-order and non-smooth dynamics.

    \item A numerical simulation is conducted to validate the proposed fractional HJB-based control strategy in a same-frequency, multi-base-station, multi-user scenario, demonstrating effective transmission power adaptation and interference mitigation under heavy-tailed channel conditions.
\end{itemize}

The remainder of this paper is organized as follows:  Section~\ref{S2} introduces the system model, including the network architecture, signal definitions, and statistical assumptions underpinning the analysis. Section~\ref{S3} develops a stochastic wireless channel model that captures both long-term and short-term fading using S$\alpha$S Lévy processes and fractional-order dynamics. Section~\ref{S4} presents a general optimal control framework for non-Gaussian stochastic systems, deriving the corresponding fractional HJB equation and proving the existence and uniqueness of its viscosity solution. Section~\ref{S5} implements the proposed framework in a downlink power control scenario, validating the effectiveness of the model through numerical simulations and performance evaluations. Finally, Section~\ref{S6} concludes this paper.

\textit{Notations:} The mathematical concepts and scope of the key symbols used in this article are shown in Table \ref{tab:notation}. 
\begin{table}[t]
		\centering
		\caption{Summary of Notations}
		\begin{tabular}{cc}
			\hline
			\textbf{Symbol} & \textbf{Description} \\
			\hline
			$\mathbb{R}^+$ & Set of non-negative real numbers \\
			$\mathbb{R}$ & Set of all real numbers \\
            $j$ & imaginary unit\\
            $\alpha$ & Stability index of the Lévy process, where $\alpha \in (0,2]$ \\ 
			$L^\alpha$ & S$\alpha$S Lévy process with heavy tails and jumps \\
            $\operatorname{sgn}(\cdot)$ & Sign function\\
			$\mathrm{d}[\cdot]^c_t$ & Quadratic variation of the continuous local martingale part \\
			$\mathbb{E}(\cdot)$ & Expectation operator for computing expected values \\
			$\mathcal{A}$ & Infinitesimal generator of a stochastic process \\
			$\mathbf{1}$ & Indicator function: equals 1 if condition holds, 0 otherwise \\
			$\nabla^\alpha$ & Generalized Riesz fractional derivative with non-locality \\
			$(\cdot)^\top$ & Transpose of a vector or a matrix \\
			$\USC$ & Space of upper semicontinuous functions \\
			$\LSC$ & Space of lower semicontinuous functions \\
			$\BUC$ & Space of bounded uniformly continuous functions \\
			$C^1$ & Continuously differentiable with respect to time\\
            $C^2$ & Twice continuously differentiable in space\\
			\hline
		\end{tabular}
		\label{tab:notation}
	\end{table}

\section{System Model}\label{S2}
\begin{figure*}[!t]
    \centering
    \includegraphics[width=0.65\linewidth]{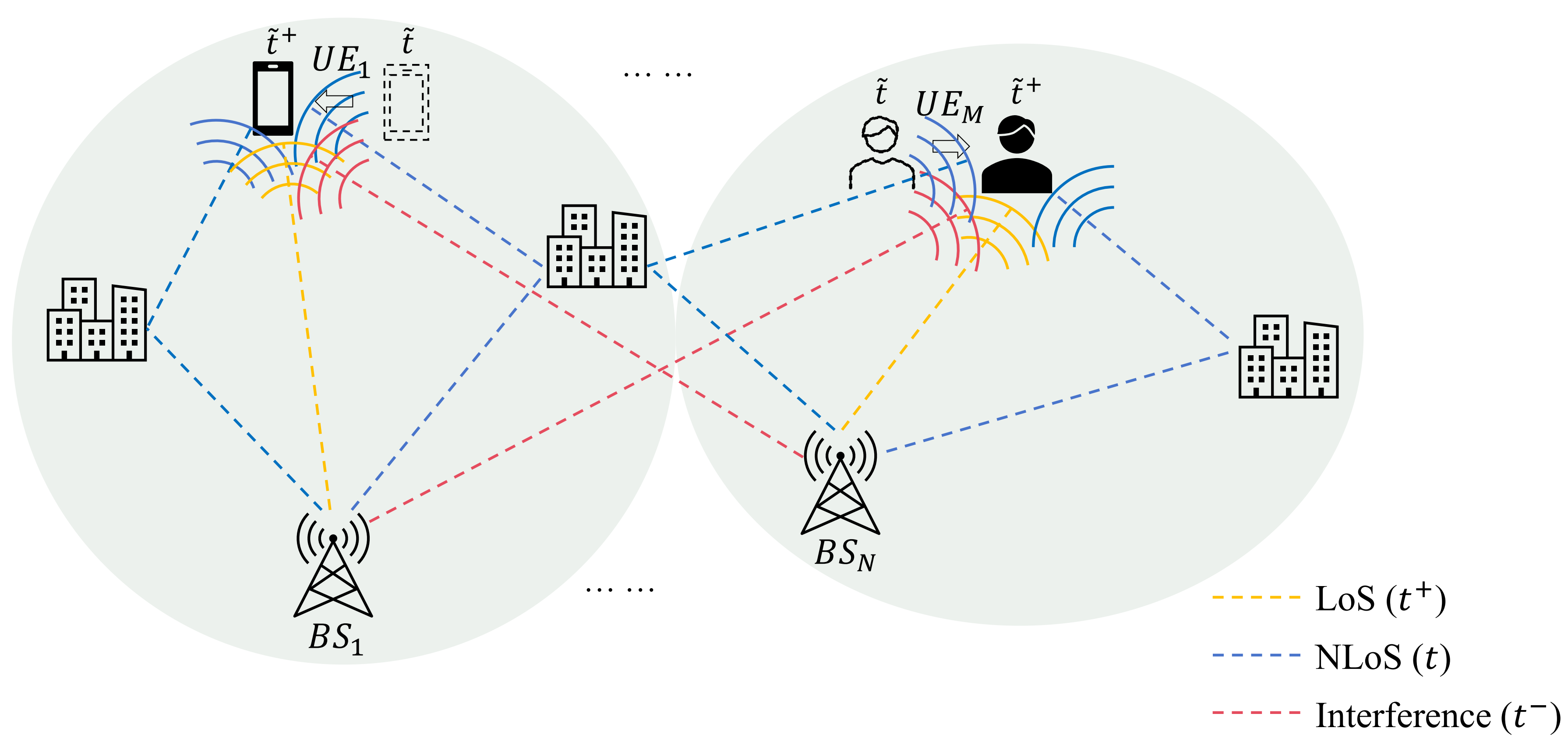}
    \caption{Illustration of signal components, transmitted at different moments, arriving simultaneously at the UE at moment $\tilde{t}^+$.}
    \label{fig}
\end{figure*}
Considering a wireless network comprising base stations (BSs) $i~(1 \leq i \leq N)$, each serving user equipments (UEs) $\ell~(1 \leq \ell \leq M)$ over a shared frequency band, mutual interference is introduced due to co-channel transmissions. For a mobile UE, the signal received at any given time is a superposition of components that have traversed paths of different lengths and were thus transmitted at different moments.

Fig. \ref{fig} illustrates this physical reality. It depicts the composition of the total signal arriving at the UE at a specific moment, denoted $\tilde{t}^+$, where $\tilde{t}^+ = t^+ + \varrho$ and $\varrho$ represents LoS transmission delay (solid black user icon), as it moves from its position at a prior moment $\tilde{t}$ (dashed white user icon). Due to varying propagation delays, these simultaneously arriving signal components were transmitted at different times: (i) Direct line-of-sight (LoS) Path (yellow): The LoS signal, having the shortest propagation path, was transmitted from the serving BS at moment $t^+$, and arrived at the UE at moment $\tilde{t}^+$. (ii) Multipath non-line-of-sight (NLoS) Components (blue): The NLoS components travel longer paths via reflection and scattering off obstacles like buildings. Fig. \ref{fig} depicts a representative NLoS path, but in reality, the signal arrives via numerous such paths. The constructive and destructive superposition of these multiple, time-delayed signal replicas is the origin of multipath fading. These components were transmitted at a slightly earlier moment $t$, and arrived at the UE at the same moment $\tilde{t}^+$. (iii) Interference Components (red): Interference signals from neighboring BSs, which typically travel the longest paths, were transmitted at an even earlier moment $t^-$, and arrived at the UE at the same moment $\tilde{t}^+$.

Crucially, in the challenging high-mobility environments considered, each of these signal components is subject to heavy-tailed fluctuations from impulsive interference and complex, non-stationary scattering. This violates the assumptions of classical models that rely on the Central Limit Theorem, causing the total received power at the UE to exhibit abrupt, non-Gaussian jumps rather than conventional Rayleigh or Rician fading. In the state-space representation, long-term fading is modeled by a stochastic differential equation (SDE) driven by S$\alpha$S Lévy noise. This SDE captures the mean-reverting dynamics of slow shadowing variations on individual propagation paths. Rapid multipath fluctuations over very short time scales are modeled by superimposing $K$ independent plane waves with different amplitudes and phases.

\subsection{Fading Structure and Power Evolution}
Let \( p_{\text{in}}(t) \) denote the transmit power of the BS and \( p_{\text{out}}(t) \) the received power at the UE served by the BS. To capture multi-scale propagation effects, the wireless channel is modeled as a concatenation of long-term fading \( \beta(t) \) and short-term fading \( \chi(t) \), applied sequentially to the transmitted power. Specifically, it indicates that the transmitted signal is affected by both path loss and long-term fading effects caused by object reflection before reaching the local area of the receiver. It then scatters in the local area of the receiver and recombines at the receiver, corresponding to the short-term fading effect in the local area. The system structure is summarized as shown in Fig.~\ref{flow}.
\begin{figure}[h]
    \centering
    \includegraphics[width=0.95\linewidth]{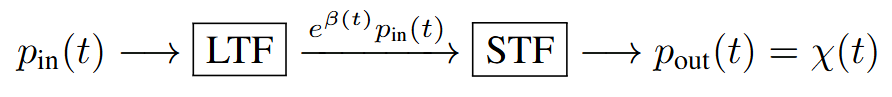}
    \caption{System Flow Chart.}
    \label{flow}
\end{figure}
This representation defines the dynamic transfer relationship between transmitted and received power, thereby enabling a continuous-time modeling approach that jointly captures large-scale attenuation and small-scale fluctuations.

\subsection{Classical Fading Mechanisms}
\subsubsection{Long-Term Fading}
Long-term fading refers to signal variations caused by large-scale path loss and shadowing. Classical models, such as log-normal fading, describe its statistical distribution but not its temporal dynamics. In reality, channel conditions vary continuously due to terrain, obstructions, and mobility. Previous works, such as Charalambous et al. \cite{r32}, have used Wiener processes to model long-term signal envelope dynamics. Let the long-term fading channel gain be given by $e^{\beta(t)}$ where $\beta(t)$ is a continuous-time stochastic process following the first order linear SDE
\begin{equation}
\mathrm{d}\beta(t) = -a (\beta(t) + b) \mathrm{d}t + \sigma_\beta \mathrm{d}W_\beta(t), 
\end{equation}
where $t\in\mathbb{R}^+$. The drift term $-a (\beta(t) + b)$ enforces mean reversion toward the equilibrium value $-b~(b>0)$, with convergence rate governed by $a > 0$. The diffusion term $\sigma_\beta \mathrm{d}W_\beta(t)$ captures random fluctuations, where $W_\beta(t)$ is a standard Wiener process and $\sigma_\beta$ denotes the noise intensity. The process is initialized at $\beta(0) = \beta_0$. Macroscopically, because the effective channel power gain $e^{\beta(t)}$ follows a log-normal distribution at any given time $t$, this model is consistent with the traditional log-normal model.

\subsubsection{Short-Term Fading}
Traditional short-term fading models represent the received signal as a superposition of multiple scattered plane waves. The in-phase and quadrature components are typically modeled as independent, zero-mean Gaussian random variables, leading to a Rayleigh-distributed envelope under the Central Limit Theorem (CLT). This formulation assumes a large number of independent and identically distributed (i.i.d.) components with finite variance, making it suitable for stationary and Gaussian environments.

Under local multipath scattering, the transmitted signal is decomposed into $K$ plane wave components, each arriving at the receiver with distinct amplitudes and phase shifts. The $k$-th component is given by, for $t \in \mathbb{R}^+$,
\begin{equation}
    E_k(t)=I_k(t) \cos \left(\omega_c t\right)-Q_k(t) \sin \left(\omega_c t\right),
\end{equation}
where $\omega_c$ is the carrier frequency and $I_k(t)$ and $Q_k(t)$ are the in-phase and quadrature components, defined respectively as
\begin{equation}
    \begin{aligned}
& I_k(t)=\sum_{m=1}^L r_{k, m} \cos \left(\omega_{k, m} t+\theta_{k, m}\right),  \\
& Q_k(t)=\sum_{m=1}^L r_{k, m} \sin \left(\omega_{k, m} t+\theta_{k, m}\right), 
\end{aligned}
\end{equation}
where $r_{k, m}$, $\omega_{k, m}$ and $\theta_{k, m}$ represent the amplitude, Doppler shift, and phase of the $(k,m)$-th wave component, respectively. By a transformation of random variables, the signal envelope at the receiver, $r_k(t)=\sqrt{I_k(t)^2 +Q_k(t)^2}$ is Rayleigh distributed. The received signal power, $r_k(t)^2$ is exponentially distributed, again by a transformation of random variables.

\subsection{$\alpha$-Stable Distributions}
To model impulsive fading and non-Gaussian effects, the $\alpha$-stable distribution is adopted as the underlying noise model. Specifically, the $r_{k,m}$ and $\theta_{k,m}$ mentioned above are modeled as independent random variables drawn from $\alpha$-stable distributions, thereby introducing heavy-tailed behaviors into the received signal components. The characteristic function of the $\alpha$-stable distribution \cite{Levy} is given by

\begin{equation}
\mathbb{E}\left[e^{j k L^\alpha}\right]=\exp \left(j \mu k-\gamma|k|^\alpha\left(1-j \beta \operatorname{sgn}(k) \tan \frac{\pi \alpha}{2}\right)\right),
\end{equation}
where $\alpha \in (0,2]$ is the stability index controlling tail heaviness and impulsiveness, with $\alpha = 2$ corresponding to the Gaussian case and $\alpha < 2$ yielding infinite variance. The skewness is governed by $\beta \in [-1,1]$, where $\beta = 0$ indicates symmetry. The scale parameter $\gamma > 0$ determines dispersion, and $\mu \in \mathbb{R}$ specifies the location. 

In many practical fading scenarios, particularly those involving aggregated multipath effects, signal components can reasonably be assumed to follow symmetric distributions. Therefore, the S$\alpha$S Lévy processes are adopted in the subsequent modeling. The corresponding characteristic function simplifies to
\begin{equation}
\mathbb{E}\left[ e^{j k L^\alpha} \right] = \exp\{j \mu k-\gamma |k|^\alpha\}, \quad 1 < \alpha \leq 2. 
\end{equation}
This work focuses on the range of $\alpha\in(1,2]$. This restriction is standard in many applications as it ensures that the process has a finite mean, a necessary condition for the expectation in the cost function to be well-defined. This stochastic framework serves as the foundation for modeling signal power dynamics and formulating optimal control strategies under non-Gaussian, interference-limited, and time-varying conditions. 

\section{Wireless Channel Modeling with Lévy-driven Fractional Dynamics}\label{S3}
Modern wireless channel modeling faces significant challenges in reconciling multi-scale propagation dynamics with non-Gaussian statistical characteristics. Although effective under static or mildly changing conditions, traditional models suffer from three major limitations: (i) discrete-scale decomposition fails to capture cross-scale interactions under dynamic conditions; (ii) Gaussian noise assumptions cannot represent impulsive and heavy-tailed phenomena common in urban or obstructed environments; and (iii) discrete-time formulations obscure the inherently continuous nature of signal evolution.

To overcome these limitations, a novel continuous-time state-space channel model is proposed, which unifies macroscopic path loss, microscopic fading, and Lévy-driven perturbations. This unified model accurately represents the relationship between the transmitted input signal power and the received output signal power. In particular, the incorporation of Lévy-driven noise components enables the modeling of abrupt changes, jump discontinuities, and heavy-tailed propagation effects, thereby enhancing the model's ability to describe complex, time-varying communication environments.

\subsection{Non-Gaussian Long-term Fading}
To model large-scale fading under non-Gaussian and non-stationary conditions, the long-term fading process $\beta(t)$ is formulated as an S$\alpha$S Lévy-driven O-U process
\begin{equation} \label{long}
\mathrm{d}\beta(t) = -a (\beta(t) + b) \mathrm{d}t + \sigma_\beta \mathrm{d}L_\beta^\alpha(t), \, \beta(0) = \beta_0, 
\end{equation}
where $t\in\mathbb{R}^+$. $L_\beta^\alpha(t)$ is a S$\alpha$S Lévy process, and $\sigma_\beta$ controls the intensity of jump-induced fluctuations. This formulation captures both smooth and impulsive variations in long-term channel behavior, extending classical models to more accurately reflect real-world fading dynamics. This approach builds upon the established work of using Gaussian-driven O-U processes for dynamic long-term fading \cite{r32} and extends it to the non-Gaussian case to better reflect real-world fading dynamics.

\subsection{Non-Gaussian Short-term Fading}
Although Gaussian-based models capture fixed-time statistics, they fail to represent temporal correlations and abrupt signal fluctuations. In practice, wireless channels often exhibit heavy-tailed behavior and amplitude jumps due to obstructions or environmental dynamics, where the finite-variance assumption and classical CLT no longer apply. To address these effects, the classical framework is extended using S$\alpha$S Lévy processes, which rely on the Generalized CLT (GCLT) to model non-Gaussian dynamics under infinite-variance conditions.

\subsubsection{In-Phase and Quadrature Components}

In conventional models, the processes $I(t)$ and $Q(t)$ are often represented using SDEs driven by Brownian motion \cite{r33}. To incorporate both the Doppler effect and heavy-tailed jump phenomena, these are replaced with S$\alpha$S Lévy processes, yielding the following SDEs
\begin{equation} \label{IQ}
\begin{aligned} 
\mathrm{d}I_k(t) &= -\frac{1}{2} \kappa I_k(t) \mathrm{d}t + \frac{1}{2} \sigma \mathrm{d}L_{I_k}^\alpha(t), \quad I_k(0)=(I_k)_0,\\ 
\mathrm{d}Q_k(t) &= -\frac{1}{2} \kappa Q_k(t) \mathrm{d}t + \frac{1}{2} \sigma \mathrm{d}L_{Q_k}^\alpha(t), \quad Q_k(0)=(Q_k)_0,
\end{aligned} 
\end{equation}
where $\kappa > 0$ is the damping coefficient controlling the rate of decay, and $\sigma > 0$ governs the intensity of fluctuations and jumps. $L_{I_k}^\alpha(t)$ and $L_{Q_k}^\alpha(t)$ are independent S$\alpha$S Lévy processes. The initial values $(I_k)_0$ and $(Q_k)_0$ define the starting states.

\subsubsection{Signal Envelope and Power}

The instantaneous power $\chi_k(t)$ and the signal envelope $r_k(t)$ are defined as
\begin{equation}
\chi_k(t) = I_k(t)^2 + Q_k(t)^2, \quad r_k(t) = \sqrt{\chi_k(t)}. 
\end{equation}
Compared with classical Rayleigh based models, this model is affected by the heavy tails and jumps of $I_k(t)$ and $Q_k(t)$ in the envelope $r_k(t)$, resulting in a higher probability of extreme fading events.

\subsubsection{Dynamics of the Instantaneous Power}
Before delving into the detailed derivation, we provide some necessary lemmas and properties.
\begin{lemma}[Generalized Itô Formula for Jump Semimartingales \cite{SDE}]\label{lemma:ito-general}
Let $X(t)$ be a real-valued càdlàg semimartingale with jumps, and let $f \in C^2(\mathbb{R})$ be a twice continuously differentiable function. Then, the process $Y(t) = f(X(t))$ satisfies
\begin{align}
\mathrm{d}f(X_t) &= f'(X_{t-})\,\mathrm{d}X(t) + \frac{1}{2}f''(X_{t-})\,\mathrm{d}[X]^c_t \notag\\
&\quad + \sum_{0 < s \le t} \left[ f(X_s) - f(X_{s-}) - f'(X_{s-})\,\Delta X_s \right],
\end{align}
where $X_{t-}$ denotes the left-limit of $X$ at time $t$ (well-defined since $X$ is càdlàg). $\Delta X_s = X_s - X_{s-}$ is the jump size of $X$ at time $s$. $[X]^c_t$ is the quadratic variation of the continuous local martingale part of $X$.
\end{lemma}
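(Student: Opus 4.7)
\emph{Proof sketch.} The plan is to establish the formula by a localization and Taylor expansion argument, carefully separating contributions from the continuous semimartingale part and the countable collection of jumps. First, by stopping $X$ at the first exit from a ball and invoking continuity of each side of the identity under such stopping, it suffices to assume $X$ is bounded; then $f$, $f'$, $f''$ are bounded and uniformly continuous on the relevant compact set, which will be needed for the remainder estimates.

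Next, I would fix $t>0$ and a partition $\pi: 0 = t_0 < t_1 < \cdots < t_n = t$, and apply the second-order Taylor expansion to the telescoping sum
\begin{equation}
f(X_t) - f(X_0) = \sum_{i=0}^{n-1} \Bigl[ f'(X_{t_i})(X_{t_{i+1}} - X_{t_i}) + \tfrac{1}{2} f''(\eta_i)(X_{t_{i+1}} - X_{t_i})^2 \Bigr],
\end{equation}
for intermediate points $\eta_i$. As $|\pi|\downarrow 0$, the first-order sum converges in probability to the stochastic integral $\int_0^t f'(X_{s-})\,\mathrm{d}X(s)$ by the very definition of the Itô integral against a càdlàg semimartingale, and by left-continuity of $s\mapsto X_{s-}$.

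The main technical obstacle is the second-order sum: it must be shown to converge to $\tfrac{1}{2}\int_0^t f''(X_{s-})\,\mathrm{d}[X]^c_s$ plus the pure jump correction $\sum_{s\le t}[f(X_s) - f(X_{s-}) - f'(X_{s-})\Delta X_s]$. The standard device is to fix $\varepsilon>0$ and split the jumps of $X$ into the finitely many "large" jumps of size $>\varepsilon$ in $[0,t]$ and the aggregate of "small" jumps. For partition intervals that straddle a large jump $\Delta X_s$, the quadratic term is replaced, up to an error controlled by uniform continuity of $f$ and $f'$ across the interval, by the exact increment $f(X_s)-f(X_{s-})-f'(X_{s-})\Delta X_s$; for the remaining intervals, the semimartingale convergence of $\sum_i (X_{t_{i+1}}-X_{t_i})^2$ to $[X]_t$ combined with boundedness of $f''$ gives convergence to $\int f''(X_{s-})\,\mathrm{d}[X]_s$, from which the small-jump contributions are then subtracted to isolate $\mathrm{d}[X]^c_s$. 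Letting $\varepsilon\downarrow 0$, the $\ell^2$-summability of the jumps of any semimartingale ensures the residual error vanishes.

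Finally one removes the localization via stopping times $\tau_m\uparrow\infty$ and passes to the limit. The hard part is precisely the uniform control of the Taylor remainders simultaneously across small partition intervals and across the countably many jumps; the rest is bookkeeping enabled by the càdlàg regularity of $X$ and the existence of the quadratic variation decomposition $[X] = [X]^c + \sum_{s\le\cdot}(\Delta X_s)^2$.
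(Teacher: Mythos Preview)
Your sketch is the standard localization-plus-Taylor-expansion argument for the It\^o formula with jumps, and it is correct in outline; in particular the large/small jump dichotomy, the use of the quadratic variation decomposition $[X]=[X]^c+\sum_{s\le\cdot}(\Delta X_s)^2$, and the final removal of localization are exactly the ingredients in Protter's proof. However, there is nothing to compare against here: the paper does not prove this lemma. Lemma~\ref{lemma:ito-general} is quoted verbatim from the reference \cite{SDE} and used purely as a tool (it is applied immediately afterward to $f(I_k,Q_k)=I_k^2+Q_k^2$ to derive the SDE for $\chi_k$). So your proposal is not an alternative to the paper's argument; it is simply a reconstruction of the textbook proof that the paper takes for granted.
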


\begin{lemma}[Lévy–Itô Decomposition of a Pure Jump Process \cite{r43}]\label{lemma:levyito}
Let $L(t)$ be a pure-jump Lévy process with Lévy measure $\nu$ and no Brownian motion part. For any truncation threshold $\epsilon > 0$, the process can be decomposed into a compensated small-jump component and a drift-compensated jump integral as follows
\begin{equation}
L(t) = \int_0^t \int_{|z| \leq \epsilon} z\, \tilde{N}(\mathrm{d}s, \mathrm{d}z)
+ \int_0^t \int_{|z| > \epsilon} z\, N(\mathrm{d}s, \mathrm{d}z),
\end{equation}
where $N(\mathrm{d}s, \mathrm{d}z)$ is the Poisson random measure associated with the jumps of $L(t)$. $\tilde{N}(\mathrm{d}s, \mathrm{d}z) = N(\mathrm{d}s, \mathrm{d}z) - \nu(\mathrm{d}z)\mathrm{d}s$ is the compensated Poisson measure (for small jumps), where $\nu(\mathrm{d}z)$ is the Lévy measure satisfying $\int_{\mathbb{R}} \min(1, z^2)\nu(\mathrm{d}z) < \infty$.

In particular, for any predictable integrand $h(t,z)$, the jump contribution to the quadratic variation satisfies
\begin{equation}
\sum_{0 < s \le t} h(s, \Delta L(s))^2
= \int_0^t \int_{\mathbb{R}} h(s,z)^2\, N(\mathrm{d}s, \mathrm{d}z).
\end{equation}
\end{lemma}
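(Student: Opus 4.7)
The plan is to construct the jump measure of $L(t)$ and decompose the process according to jump size, handling each regime with the appropriate probabilistic machinery. First, I would define the Poisson random measure $N$ on $\mathbb{R}^+ \times \mathbb{R}\setminus\{0\}$ by $N((0,t]\times A) := \#\{s\in(0,t]:\Delta L(s)\in A\}$ for Borel sets $A$ bounded away from the origin. Standard Lévy process theory then shows that $N$ is a Poisson random measure with intensity $\mathrm{d}s\otimes\nu(\mathrm{d}z)$ and that disjoint sets $A$ yield independent counting processes; this is the structural fact that drives the whole argument, and the decomposition is obtained by splitting $N$ at the threshold $\epsilon>0$.

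For the large-jump part $|z|>\epsilon$, the key observation is that $\nu(\{|z|>\epsilon\})<\infty$, so $\int_0^t\!\int_{|z|>\epsilon} z\,N(\mathrm{d}s,\mathrm{d}z)$ reduces to a finite pathwise sum of jump sizes exceeding $\epsilon$, i.e., a compound Poisson process requiring no compensation. For the small-jump part $|z|\leq\epsilon$, the integrand $z$ is generally not integrable against $N$ since $\int_{|z|\leq\epsilon}|z|\nu(\mathrm{d}z)$ may diverge, but $\int_{|z|\leq\epsilon} z^2\,\nu(\mathrm{d}z)<\infty$ by the defining property of a Lévy measure stated in the lemma. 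Hence I would construct $M^\epsilon(t) := \int_0^t\!\int_{|z|\leq\epsilon} z\,\tilde{N}(\mathrm{d}s,\mathrm{d}z)$ as an $L^2$-limit of compensated compound Poisson martingales obtained by truncating jumps below scale $1/n$, yielding a square-integrable martingale with predictable quadratic variation $\langle M^\epsilon\rangle_t = t\int_{|z|\leq\epsilon} z^2\,\nu(\mathrm{d}z)$. Matching characteristic functions via the Lévy–Khintchine representation then confirms that the sum of these two components equals $L(t)$, since the pure-jump assumption eliminates any residual drift or Brownian term.

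The main obstacle is the rigorous construction of $M^\epsilon$, because the small-jump intensity can be infinite and so no pathwise limit exists; this requires Doob's maximal inequality together with a second-moment computation exploiting independence of jumps on disjoint annuli, plus a verification that the resulting limit does not depend on the truncation sequence. For the quadratic-variation identity, the absence of a Brownian component gives $[L]^c_t=0$, so the entire quadratic variation is the pure-jump sum $\sum_{0<s\leq t}(\Delta L_s)^2$. The claimed equality $\sum_{0<s\leq t} h(s,\Delta L(s))^2 = \int_0^t\!\int_{\mathbb{R}} h(s,z)^2\,N(\mathrm{d}s,\mathrm{d}z)$ then follows by interpreting $N$ as the counting measure of the jumps of $L$: one verifies it directly for simple predictable integrands of the form $h(s,z)=\mathbf{1}_{(u,v]}(s)\mathbf{1}_A(z)$ and extends to general predictable $h$ by a monotone class argument combined with the usual $L^2$-density extension of stochastic integrals against Poisson random measures.
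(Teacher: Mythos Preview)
The paper does not prove this lemma at all: it is stated as a cited result from Applebaum's \textit{Lévy Processes and Stochastic Calculus} and is invoked as a black box in the derivation of the instantaneous-power SDE. Your proposal therefore cannot be compared against any argument in the paper itself.

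That said, what you outline is precisely the standard textbook construction found in the cited reference: build the jump measure $N$, split at threshold $\epsilon$, treat the large-jump piece as a compound Poisson sum (finite activity since $\nu(\{|z|>\epsilon\})<\infty$), construct the compensated small-jump martingale as an $L^2$-limit using $\int_{|z|\leq\epsilon} z^2\,\nu(\mathrm{d}z)<\infty$, and then identify the sum with $L(t)$ via the Lévy--Khintchine formula. The quadratic-variation identity is likewise the standard reading of $N$ as the counting measure of jumps, verified on simple integrands and extended by monotone-class/$L^2$ density. There is no gap in your plan; it is simply more than the paper itself provides, since the authors defer the proof entirely to the literature.
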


\begin{lemma}[Closure of S$\alpha$S Distributions under Scaling and Linear Combinations \cite{Samorodnitsky}]\label{lemma:stable-closure}
Let $X$ and $Y$ be independent symmetric $\alpha$-stable random variables with $1 < \alpha \le 2$. Then:
\begin{enumerate}
    \item For any real constant $c \in \mathbb{R}$, $cX$ is also $\alpha$-stable with scale $|c|$.
    \item For any $c_1, c_2 \in \mathbb{R}$, the linear combination $Z = c_1 X + c_2 Y$ is also symmetric $\alpha$-stable with the same index $\alpha$ and scale parameter
    \begin{equation}
        \gamma_Z = \left( |c_1|^\alpha + |c_2|^\alpha \right)^{1/\alpha}.
    \end{equation}
\end{enumerate}
\end{lemma}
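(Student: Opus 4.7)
The plan is to establish both claims through direct computation of characteristic functions, leveraging the closed-form expression for symmetric $\alpha$-stable laws recalled in Section~II. Since the statement is phrased in a normalized parameterization, I will assume without loss of generality that $X$ and $Y$ are standardized (zero location, unit scale), so that $\mathbb{E}[e^{jkX}] = \exp(-|k|^\alpha)$ and analogously for $Y$; the general case follows by an identical scaling argument together with bookkeeping of the scale parameter under the paper's $\gamma|k|^\alpha$ convention versus the $\sigma^\alpha|k|^\alpha$ convention implicit in the statement.

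For the first claim, I would substitute $ck$ into the characteristic function of $X$:
$\mathbb{E}[e^{jk(cX)}] = \mathbb{E}[e^{j(ck)X}] = \exp(-|ck|^\alpha) = \exp(-|c|^\alpha|k|^\alpha)$,
which is precisely the characteristic function of a symmetric $\alpha$-stable law with scale $|c|$. Invoking the uniqueness of the characteristic function (L\'evy's inversion theorem) then delivers part~1. Note that the imaginary skewness term is automatically absent throughout this substitution, so the symmetry property is preserved.

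For the second claim, I would exploit independence to factor
$\mathbb{E}[e^{jk(c_1X+c_2Y)}] = \mathbb{E}[e^{jkc_1X}]\,\mathbb{E}[e^{jkc_2Y}]$,
then apply part~1 to each factor and multiply the resulting exponentials to obtain
$\exp\bigl(-(|c_1|^\alpha+|c_2|^\alpha)|k|^\alpha\bigr)$.
Reading off the exponent identifies this as the characteristic function of a symmetric $\alpha$-stable law with unchanged stability index $\alpha$ and scale $(|c_1|^\alpha+|c_2|^\alpha)^{1/\alpha}$, which is exactly the asserted formula. Uniqueness of characteristic functions once more concludes the proof.

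Because the derivation is almost entirely algebraic, there is no genuine analytic obstacle; the main point demanding care is strict discipline about the scale-parameter convention, since mixing the two parameterizations would yield an incorrect exponent of $1/\alpha$ in the final scale. The hypothesis $1<\alpha\le 2$ plays no algebraic role here beyond keeping the setting consistent with the mean-finiteness assumption used elsewhere in the paper; the same computation in fact works for all $\alpha\in(0,2]$.
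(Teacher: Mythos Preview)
Your characteristic-function argument is correct and is exactly the standard derivation of this closure property. Note, however, that the paper does not supply its own proof of this lemma: it is stated with a citation to Samorodnitsky and Taqqu and used as a black box, so there is nothing in the paper to compare against beyond observing that your computation is the one found in that reference.
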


Initially, by applying the two-dimensional case of Lemma \ref{lemma:ito-general}, $\chi_k(t) = f(I_k(t),Q_k(t))=I_k^2(t) + Q_k^2(t)$ yields
\begin{equation} 
\begin{aligned}
\mathrm{d} \chi_k(t)= & 2 I_k(t-) \mathrm{d} I_k(t)+2 Q_k(t-) \mathrm{d} Q_k(t)+\mathrm{d}\left[I_k, I_k\right]_t^c\\
&+\mathrm{d}\left[Q_k, Q_k\right]_t^c +\sum_{0<s \leq t}\left\{[I_k(s)]^2+[Q_k(s)]^2\right.\\
&- [I_k(s-)]^2-[Q_k(s-)]^2\\
&\left.-2 I_k(s-) \Delta I_k(s)-2 Q_k(s-) \Delta Q_k(s)\right\}.
\end{aligned}
\end{equation}
Since $[I_k(s)]^2-[I_k(s-)]^2=2I_k(s-)\Delta I_k(s)+[\Delta I_k(s)]^2$, (and similarly for $Q_k$), the jump term becomes
\begin{equation}
\sum_{0<s \leq t}\left\{[\Delta I_k(s)]^2+[\Delta Q_k(s)]^2\right\} .
\end{equation}
For the $\alpha$-stable regime with $1<\alpha<2$, the process typically has no continuous Gaussian component, implying $\mathrm{d}\left[I_k, I_k\right]_t^c=\mathrm{d}\left[Q_k, Q_k\right]_t^c=0$. In the pure Gaussian case $\alpha=2$, there would be a continuous Brownian motion part. Thus, the dynamics of $\chi_k(t)$ are described as
\begin{equation}\label{chi}
\begin{aligned}
\mathrm{d} \chi_k(t)= & 2 I_k(t-) \mathrm{d} I_k(t)+2 Q_k(t-) \mathrm{d} Q_k(t)\\
&+\sum_{0<s \leq t}\left\{[\Delta I_k(s)]^2+[\Delta Q_k(s)]^2\right\}.
\end{aligned}
\end{equation}

Substituting \eqref{IQ} into \eqref{chi}, the evolution of $\chi_k(t)$ becomes
\begin{align}
\mathrm{d} \chi_k(t)=&-\kappa \chi_k(t-) \mathrm{d} t+\sum_{0<s \leq t}\left(\left[\Delta I_k(s)\right]^2+\left[\Delta Q_k(s)\right]^2\right)\notag\\
&+
\sigma\left[I_k(t-) \mathrm{d} L_{I_k}^\alpha(t)+Q_k(t-) \mathrm{d} L_{Q_k}^\alpha(t)\right] .
\end{align}
For $\alpha < 2$, the infinite variance property implies
\begin{equation}
\int_{\mathbb{R}} z^2 \nu(\mathrm{d} z)=\infty,
\end{equation}
which precludes incorporating the jump-squared term into the drift, as the integral diverges. To address this, the Lévy measure is truncated by restricting jumps to \( |z| \leq \epsilon \), yielding the modified measure \( \nu_\epsilon(\mathrm{d}z) = \nu(\mathrm{d}z) \mathbf{1}_{|z| \leq \epsilon} \), which satisfies
\begin{equation}
\int_{|z| \leq \epsilon} z^2 \nu_\epsilon(\mathrm{d} z)<\infty.
\end{equation}
After truncation, process $L_{I_k}^{\alpha, \epsilon}(t)=\int_0^t \int_{|z| \leq \epsilon} z \tilde{N}_I(\mathrm{d} s, \mathrm{d} z)$ and similar $L_{Q_k}^{\alpha, \epsilon}(t)$ become finite-variance Lévy processes, where $\tilde{N}_I$ and $\tilde{N}_Q$ are compensated Poisson random measures ensuring zero mean. Using Lemma \ref{lemma:levyito}, the contribution of the jump process becomes
\begin{align}
\sum_{0<s \leq t}&\left[\left(\Delta I_k(s)\right)^2+\left(\Delta Q_k(s)\right)^2\right]\notag\\
&=\left(\frac{\sigma}{2}\right)^2 \int_0^t \int_{|z| \leq \epsilon} z^2\left[N_I(\mathrm{d} s, \mathrm{d} z)+N_Q(\mathrm{d} s, \mathrm{d} z)\right]\notag\\
&+2\left(\frac{\sigma}{2}\right)^2 \int_0^t \int_{|z| \leq \epsilon} z^2 \nu_\epsilon(\mathrm{d} z) \mathrm{d} s,
\end{align}
where $N_I(\mathrm{d} s, \mathrm{d} z)=\tilde{N}_I(\mathrm{d} s, \mathrm{d} z)+\nu_\epsilon(\mathrm{d} z) \mathrm{d} s$, $N_Q(\mathrm{d} s, \mathrm{d} z)=\tilde{N}_Q(\mathrm{d} s, \mathrm{d} z)+\nu_\epsilon(\mathrm{d} z) \mathrm{d} s$. 
Since $I_k$ and $Q_k$ processes are symmetric and independent of identical Lévy measures, the sum of jump processes simplifies to
\begin{equation}
\sum_{0<s \leq t}\left[\left(\Delta I_k(s)\right)^2+\left(\Delta Q_k(s)\right)^2\right]=\frac{\sigma^2}{2} \int_0^t C_\epsilon \mathrm{d} s+M(t),
\end{equation}
with 
$$ C_\epsilon:=\int_{|z| \leq \epsilon} z^2 \nu_\epsilon(\mathrm{d} z), $$
$$ M(t):=\left(\frac{\sigma}{2}\right)^2 \int_0^t \int_{|z| \leq \epsilon} z^2\left(\tilde{N}_I+\tilde{N}_Q\right) \mathrm{d} z \mathrm{d} s, $$
where $C_\epsilon$ aggregates the deterministic second-order moment contributions of all Lévy jumps with amplitudes satisfying $|z|\leq\epsilon$, thereby fixing the drift intensity. Specifically, it is equivalent to converting indistinguishable small burst fading into a deterministic noise floor, increasing the amplitude of the drift term in power evolution SDE. The process $M(t)$ is a square-integrable martingale with zero expectation, capturing the random fluctuations of the cumulative squared jumps around their mean value specified by $C_\epsilon$. Physically, it describes the instantaneous power random fluctuations introduced by truncated small jumps, which cause unpredictable but zero mean disturbances to the evolution of $\chi_k (t)$.

Define the normalized Lévy process $\tilde{L}_k^\alpha(t)$,
\begin{equation}
\mathrm{d} \tilde{L}_k^\alpha(t)=\frac{I_k(t-) \mathrm{d} L_{I_k}^\alpha(t)+Q_k(t-) \mathrm{d} L_{Q_k}^\alpha(t)}{\sqrt{\chi_k(t-)}}.
\end{equation}
Let $L_{I_k}^\alpha(t)$ and $L_{Q_k}^\alpha(t)$ be independent S$\alpha$S Lévy processes. The in-phase and quadrature processes $I_k(t)$ and $Q_k(t)$ satisfy SDEs of the form
\begin{equation}
\mathrm{d} I_k(t)=f\left(I_k(t)\right) \mathrm{d} t+g\left(I_k(t)\right) \mathrm{d} L_{I_k}^\alpha(t),
\end{equation}
and similarly for $Q_k(t)$. Under standard Lipschitz conditions on $f(\cdot)$ and $g(\cdot)$, these SDE-driven processes remain semimartingales. Hence, $I_k(t-)$ and $Q_k(t-)$ are predictable with respect to the filtration generated by $L_{I_k}^\alpha$ and $L_{Q_k}^\alpha$.

At any jump time $\tau$, the process $\tilde{L}_k^\alpha(t)$ admits
\begin{equation}
	\Delta \tilde{L}_k^\alpha(\tau) =
	\frac{I_k(\tau-) \Delta L_{I_k}^\alpha(\tau) + Q_k(\tau-) \Delta L_{Q_k}^\alpha(\tau)}{\sqrt{\chi_k(\tau-)}},
\end{equation}
where both $\Delta L_{I_k}^\alpha(\tau)$ and $\Delta L_{Q_k}^\alpha(\tau)$ are independent symmetric $\alpha$-stable random variables.

Applying Lemma \ref{lemma:stable-closure}, the numerator in $\Delta \tilde{L}_k^\alpha$ remains $\alpha$-stable, and the division by $\sqrt{\chi_k(t-)}$ introduces a predictable scaling. Therefore, $\tilde{L}_k^\alpha(t)$ is a normalized, symmetric $\alpha$-stable semimartingale process with unchanged stability index $\alpha$.

The resulting SDE for $\chi_k(t)$ can be obtained as
\begin{align}\label{short}
\mathrm{d} \chi_k(t)=\left(\frac{\sigma^2}{2} C_\epsilon -\kappa \chi_k(t-)\right) \mathrm{d} t+\sigma \sqrt{\chi_k(t-)} \mathrm{d} \tilde{L}_k^\alpha(t),
\end{align}
which resembles the Cox-Ingersoll-Ross (CIR) process widely used in finance to preserve positivity~\cite{r42}.

\subsection{The Composite Channel Model}
The proposed model captures a special non-Gaussian wireless communication scenario where the signal, after traveling a significant distance to the receiver's vicinity and experiencing path loss and long-term fading from reflections, undergoes scattering in the receiver's local area and is recombined, exhibiting short-term fading effects. The effects of long-term fading and short-term fading are connected in series, and the specific process is as follows.
\subsubsection{The Non-Gaussian long-term Fading Process}
The long-term fading component $\beta(t) \in \mathbb{R}$ is governed by the following SDE
\begin{equation} 
\mathrm{d}\beta(t) = -a (\beta(t) + b) \mathrm{d}t + \sigma_\beta \mathrm{d}L_\beta^\alpha(t), \, t\in\mathbb{R}^+
\end{equation}
as in \eqref{long}, where $a>0$, $b>0$, $\sigma_\beta>0$. The output signal power after the long-term fading stage is given by $e^{\beta(t)}p_{\text{in}}(t)$.

\subsubsection{The Non-Gaussian Short-term Fading Process}
For the short-term fading process $\chi(t)\in\mathbb{R}^+$, the evolution is described by
\begin{align}
\mathrm{d} \chi(t)=\left(\frac{\sigma_\chi^2}{2} C_\epsilon -\kappa \chi(t-)\right) \mathrm{d} t&+\sigma_\chi \sqrt{\chi(t-)} \mathrm{d} \tilde{L}_\chi^\alpha(t), \notag\\
&\quad \chi(0)=\chi_0, \quad t\in \mathbb{R}^+
\end{align}
as in \eqref{short}, with $\sigma_\chi>0$, $\kappa>0$. The $\tilde{L}_\chi^\alpha(t)$ is an independent S$\alpha$S Lévy process. To analyze the mean behavior of $\chi(t)$, taking expectations and noting the zero-mean property of the Lévy noise yields
\begin{align}
\mathbb{E}[\mathrm{d} \chi(t)]=\mathbb{E}\left[\left(\frac{\sigma_\chi^2}{2} C_\epsilon -\kappa \chi(t-)\right)\right] \mathrm{d} t.
\end{align}

By solving, it can be concluded that the expected value of $\chi(t)$ approaches a steady-state value $\mathbb{E}[\chi(t)]=\frac{\sigma_\chi^2}{2 \kappa} C_\epsilon$, as $t \rightarrow \infty$. To reflect the adaptation of $\chi(t)$ to its target mean value, the SDE is reformulated as
\begin{align}
\mathrm{d} \chi(t)=&\frac{\sigma_\chi^2}{2} C_\epsilon\left(1-\frac{\chi(t-)}{e^{\beta(t)} p_{\text {in}}(t)} \right) \mathrm{d} t\notag\\
&+\sigma_\chi \sqrt{\chi(t-)} \mathrm{d} \tilde{L}_\chi^\alpha(t), \quad \chi(0)=\chi_0, \quad t\in \mathbb{R}^+.
\end{align}

To avoid singularities when $p_{\text{in}}(t) = 0$, a small constant $0<\rho\ll1$ is introduced
\begin{align}
\mathrm{d} \chi(t)=&\frac{\sigma_\chi^2}{2} C_\epsilon\left(1-\frac{\chi(t-)}{e^{\beta(t)} p_{\text {in}}(t)+\rho} \right) \mathrm{d} t\notag\\
&+\sigma_\chi \sqrt{\chi(t-)} \mathrm{d} \tilde{L}_\chi^\alpha(t), \quad \chi(0)=\chi_0, \quad t\in \mathbb{R}^+.
\end{align}

\subsubsection{Slow-Fast Dynamics}
To capture the temporal disparity between long-term and short-term fading, a slow-fast dynamical system is introduced. In practical wireless scenarios, large-scale effects such as path loss and shadowing evolve slowly due to gradual changes in user mobility or terrain, whereas small-scale fading fluctuates rapidly due to multipath propagation and localized scattering. A small parameter \( 0 < \tau \ll 1 \) is introduced to characterize this time-scale separation. Specifically, \( \beta(t) \) evolves on the slow scale, while \( \chi(t) \) evolves on the fast scale. Following standard singular perturbation techniques, the SDE for \( \chi(t) \) is rescaled by \( \tau^{-1} \) in the drift term and \( \tau^{-1/2} \) in the diffusion term.

This framework enables clearer analysis of the interaction between fading components. The slow variation of \( \beta(t) \) allows it to be treated as quasi-static over short intervals, while \( \chi(t) \) responds rapidly to instantaneous variations in the effective gain \( e^{\beta(t)} p_{\text{in}}(t) \). Such a separation is critical for control design and performance evaluation where power control and adaptation operate at different time scales.

The resulting slow-fast system is expressed as, for $t \in \mathbb{R}^+$,
\begin{equation} \label{model long}
\mathrm{d}\beta(t) = -a (\beta(t) + b) \mathrm{d}t + \sigma_\beta \mathrm{d}L_\beta^\alpha(t), \quad \beta(0) = \beta_0,
\end{equation}
\begin{align}\label{model short}
\mathrm{d} \chi(t)=&\frac{1}{\tau}\frac{\sigma_\chi^2}{2} C_\epsilon\left(1-\frac{\chi(t-)}{e^{\beta(t)} p_{\text {in}}(t)+\rho} \right) \mathrm{d} t\notag\\
&+\frac{\sigma_\chi}{\sqrt{\tau}} \sqrt{\chi(t-)} \mathrm{d} \tilde{L}_\chi^\alpha(t), \quad \chi(0)=\chi_0.
\end{align}

The complete stochastic model for the concatenated channel, which captures both long-term and short-term fading effects with multiscale dynamics, i.e. $p_{\text{out}}(t)=\chi(t)$. This unified Lévy-driven state-space model captures both slow-varying large-scale fading and rapid small-scale fluctuations within a continuous-time stochastic framework. The model forms the foundation for subsequent power control and optimization strategies in interference-limited, non-Gaussian wireless environments.

\section{General fractional HJB control framework}\label{S4}
This section develops a general stochastic control framework for systems governed by non-Gaussian dynamics, particularly those driven by S$\alpha$S Lévy processes. To account for the discontinuities and heavy-tailed characteristics inherent in such processes, a fractional-order extension of the classical HJB equation is formulated. The resulting fractional HJB equation incorporates a nonlocal Riesz-type operator, derived from the infinitesimal generator of Lévy-driven SDEs, enabling robust optimal control design in interference-limited and non-stationary environments.

Furthermore, the existence and uniqueness of viscosity solutions to the fractional HJB equation are established using the comparison principle and the Perron’s method. The Riesz fractional operator captures the nonlocal jump behavior intrinsic to Lévy dynamics, thereby providing a rigorous theoretical foundation for constructing consistent and reliable control policies under heavy-tailed uncertainty.

\subsection{Derivation of the Fractional HJB Control Framework}

Consider a general system with state $x(t) \in \mathbb{R}^d$ described by the stochastic differential equation
\begin{align}\label{state}
\mathrm{d}x(s)&=b(x(s),u(s))\mathrm{d}s+\sigma(x(s))\mathrm{d}L^\alpha(s),\notag\\
x(t)&=x_0,
\end{align}
where $x_0$ is the given initial state at time $t>0$, and $T$ is the 
terminal time. $u(\cdot) \in \mathcal{U}$ is an admissible control, adapted to the natural filtration and satisfying measurability and integrability conditions. $b(x, u)$ represents the drift term, which is Lipschitz in $x$ and measurable in $u$. $\sigma(x)$ denotes the state-dependent noise intensity, which is Lipschitz with uniformly bounded determinant $|\det\sigma(x)|>0$. $L^\alpha$ is a vector of S$\alpha$S Lévy processes with Lévy measure $\nu_\alpha(\mathrm{d}\xi)=C_\alpha|\xi|^{-d-\alpha}\mathrm{d}\xi$. Among them, $C_\alpha=\frac{\alpha 2^{\alpha-1} \Gamma\left(\frac{\alpha+d}{2}\right)}{\pi^{d / 2} \Gamma\left(1-\frac{\alpha}{2}\right)}$ represents the normalization factor.

The objective is to minimize the cost functional
\begin{equation}
J(u; t) = \mathbb{E} \left[ \int_t^T L(s, x(s), u(s)) \mathrm{d}s + g(x(T)) \right],
\end{equation}
where $L$ is a general running cost and $g$ is a terminal cost. 

To utilize the principles of DP, at any intermediat time $t\in[0,T]$, the value function is defined by
\begin{equation}
    V(t,x)=\min _{u(\cdot)\in\mathcal{U}} J(u;t). 
\end{equation}
For a small time increment $\Delta t$, 
\begin{align}\label{DP}
V\left(t, x\right)=&\min _{u(\cdot)} \mathbb{E}\left[\int_t^{t+\Delta t} L\left(s,x(s),u(s)\right) \mathrm{d} s\right.\notag\\
&\left.+V\left(t+\Delta t, x(t+\Delta t) \right) \mid x(t)=x\right].
\end{align}

In the channel model, the use of S$\alpha$S Lévy processes with $\alpha < 2$ leads to infinite variance, rendering classical Itô calculus and second-order expansions inapplicable. However, for symmetric, isotropic Lévy processes with zero drift and no Gaussian component, the infinitesimal generator can still be derived via the Lévy–Khintchine formula. In the homogeneous case with constant noise intensity, this generator reduces to the fractional Laplacian \( (-\Delta)^{\alpha/2} \), also known as the Riesz fractional derivative.

In contrast, \eqref{state} involves a state-dependent noise coefficient \( \sigma(x) \), resulting in a non-homogeneous process. In this setting, the standard Riesz operator with kernel \( \|x - y\|^{-d - \alpha} \) is no longer valid. To accommodate spatial inhomogeneity, a generalized Riesz fractional operator must be introduced, which can be defined in two equivalent forms. From the kernel perspective, it acts on $f\in C^2(\mathbb{R}^d)$, both the function itself and its first and second partial derivatives are globally bounded.
\begin{equation}
\nabla_\sigma^\alpha f(x)=\mathrm{P}. \mathrm{V}. \int_{\mathbb{R}^d}[f(x)-f(y)] K_\sigma(x, y) \mathrm{d} y,
\end{equation}
where $\mathrm{P.V.}$ represents the principal integral. The kernel $K_\sigma(x, y)=\frac{C_\alpha}{\left\|\sigma(x)^{-1}(x-y)\right\|^{d+\alpha}} \frac{1}{|\operatorname{det} \sigma(x)|}$ explicitly depends on $\sigma(x)$. From the Lévy-Khintchine standpoint, let
\begin{align}
\mathcal{A}_\sigma f(x) =& \int_{\mathbb{R}^{d} \setminus \{0\}} \left[ f(x + \sigma(x)\xi) - f(x) \right.\notag\\
    & \left. - \mathbf{1}_{|\xi| < 1} \sigma(x)\xi \cdot \nabla f(x)  \right] \nu(\mathrm{d}\xi),
\end{align}
with $\nu(\mathrm{d} \xi)=\frac{C_\alpha}{|\xi|^{d+\alpha}} \mathrm{d} \xi$ the Lévy measure of $L^\alpha$. The indicator function $\mathbf{1}_{|\xi| < 1}$ adjusts the expansion for small jumps to ensure convergence. 
The assertion that $\nabla_\sigma^\alpha$ and $\mathcal{A}_\sigma$ are equivalent descriptions is based on a formal change of variables, $y=x+\sigma(x) \xi$, within the integral definition of the generator. This equivalence holds rigorously for functions $f \in C_b^2\left(\mathbb{R}^d\right)$ (i.e., twice continuously differentiable functions with bounded derivatives) under the following two key assumptions on the noise matrix $\sigma(x)$ (see e.g., \cite{r43,matter}).

\emph{Uniform Ellipticity}:  There exist constants $0<\lambda \leq \Lambda<\infty$ such that the matrix $\sigma(x) \sigma(x)^T$ satisfies $\lambda I \leq \sigma(x) \sigma(x)^T \leq \Lambda I$ for all $x \in \mathbb{R}^d$. The lower bound ensures that $\sigma(x)$ is uniformly invertible, keeping its determinant bounded away from zero and its inverse well-behaved, which is essential for the kernel $K_\sigma(x, y)$ to be well-defined. The upper bound controls the growth of the noise intensity.

\emph{Regularity}: The matrix-valued function $\sigma(x)$ is Lipschitz continuous, as is standard for the well-posedness of the SDE in \eqref{state}. This smoothness condition is necessary for the validity of the change of variables transformation.

Under the above two conditions, which are standard in the theory of integro-differential equations, the kernel-based representation (36) and the Lévy-Khintchine generator form (37) are mathematically equivalent.

Assuming \( V \) is sufficiently smooth and leveraging the independent increment property of Lévy processes, a nonlocal Taylor expansion is applied to \( V(t + \Delta t, x(t + \Delta t)) \). This expansion accounts for jump effects via an integral term weighted by the Lévy measure. Applying Itô’s formula for Lévy processes, the increment of \( V \) can be expressed as
\begin{align}\label{Ex}
V(t+\Delta t, x(t+\Delta t))=&V(t, x)+\partial_t V(t, x) \Delta t\notag\\
&+\nabla_x V(t, x) \cdot b(x, u) \Delta t\notag\\
&+\nabla^\alpha_\sigma V(t, x) \Delta t+o(\Delta t),
\end{align}
where the $\nabla^\alpha_\sigma V(t, x)$ arises from the nonlocal integral component associated with the jumps of the Lévy process. From the generator perspective, this term represents the action of the infinitesimal operator.

It is important to note that the expansion in \eqref{Ex} is a heuristic step that assumes sufficient smoothness of the value function $V$ to motivate the form of the resulting HJB equation. We recognize that $V$ is generally not smooth for jump-diffusion processes. Therefore, once the equation is derived, we will abandon this smoothness assumption and seek a solution in the rigorous weak sense of viscosity solutions, which is specifically designed to handle such non-smooth cases.

Substituting \eqref{Ex} into \eqref{DP}, subtracting $V$ from both sides, and taking the limit as $\Delta t\rightarrow 0$, we arrive at
\begin{align}
    0 =& \min_{u(\cdot)} \left\{L(x, u)+\partial_t V(t, x)\right.\notag\\
    &\Big.+\nabla_x V(t, x) \cdot b(x, u)+\nabla^\alpha_\sigma V(t, x)  \Big\} .
\end{align}
When $\Delta t \to 0$, the term $\frac{o(\Delta t)}{\Delta t}\to 0$. Introduce the Legendre transform and define the Hamiltonian quantity as follows
\begin{equation}
H(x, p):=\min _{u(\cdot)}p \cdot b(x, u)+L(x, u),
\end{equation}
where $p=\nabla_x V(t,x)$. Overall, the fractional HJB equation yields 
\begin{equation}\label{eq1}
    \partial_t V(t,x) + H(\nabla_x V(t,x)) +\nabla^\alpha_\sigma V(t,x) = 0,
\end{equation}
with the terminal condition $g(x(T))$.

\subsection{Existence and Uniqueness of Viscosity Solutions}
Firstly, the concepts of different types of  viscosity solutions applicable to fractional operators are introduced. For a function space, the notations $\USC$ and $\LSC$ denote upper and lower semicontinuous functions, respectively.

\begin{definition}[Viscosity Subsolution]\label{def1} 
A function $V \in \USC([0, T]\times \mathbb{R}^{d})$ is a viscosity subsolution of \eqref{eq1} if, for any $(t_0,x_0) \in (0, T] \times \mathbb{R}^d$ and any test function $\phi \in C^{1,2}((0, T] \times \mathbb{R}^{d})$ such that $V - \phi$ attains a local maximum at $(t_0,x_0)$, the following inequality holds 
\begin{align}\label{eq2} 
\partial_t \phi + H\left(\nabla_{x} \phi\right) + \nabla^\alpha_\sigma[\phi] \leq 0.
\end{align}
\end{definition}

\begin{definition}[Viscosity Supersolution]\label{def2} 
A function $V \in \LSC([0, T] \times \mathbb{R}^d)$ is a viscosity supersolution of \eqref{eq1} if, for any $(t_0,x_0) \in (0, T] \times \mathbb{R}^d$ and any test function $\phi \in C^{1,2}((0, T] \times \mathbb{R}^d)$ such that $V - \phi$ attains a local minimum at $(t_0,x_0)$, the following inequality holds
\begin{align}
\partial_t \phi + H\left(\nabla_{x} \phi\right) + \nabla^\alpha_\sigma[\phi] \geq 0.
\end{align} 
\end{definition}

\begin{definition}[Viscosity Solution]\label{def3} 
A function $V \in C([0, T] \times \mathbb{R}^d)$ is called a viscosity solution of \eqref{eq1} if it is both a viscosity subsolution and a viscosity supersolution. \end{definition}

To ensure the adequacy of the problem, the following assumptions are invoked.

\textbf{(A1)} The Hamiltonian $H: \mathbb{R}^d \rightarrow \mathbb{R}$ is continuous on $\mathbb{R}^d$.

\textbf{(A2)} There exists a constant $L_H > 0$ such that for all $x$, and $q_1, q_2\in \mathbb{R}^d$
\begin{equation*}
\left| H(x, q_1) - H(x, q_2) \right| \leq L_H || q_1 - q_2 ||.
\end{equation*}

\textbf{(A3)} There exists a constant $C_H \geq 0$ such that for all $x$, and $q\in\mathbb{R}^d$
\begin{equation*}
\left| H(x, q) \right| \leq C_H (1 + || q ||).
\end{equation*}

\textbf{(A4)} The terminal condition $V(T, x)=g(x(T))$ is bounded and uniformly continuous, that is, $g(x(T))\in \BUC(\mathbb{R}^d)$.

\textbf{(A5)} For each test function $\phi\in C^{1,2}$ the following kernel-growth condition is satisfied (see, e.g., \cite{Barles}):
\begin{equation*}
    \int_{\mathbb{R}^d\setminus\{0\}}|\phi(x+\sigma(x)\xi)-\phi(x)|\,\nu(\mathrm{d}\xi)<\infty.
\end{equation*}

\subsubsection{Existence of Viscosity Solutions}
In order to establish the existence of viscosity solutions for \eqref{eq1}, the Perron’s method suitable for the non-local fractional operator is adopted. The key idea is to construct an upper envelope of subsolutions and show that it is itself a solution, relying on the comparison principle. Specifically, by using Riesz fractional derivatives as zero or bounded constants and linear functions, the existence of global subsolutions and supersolutions has been proven. The adaptation of the Perron's method to the non-local case is crucial due to the nature of the fractional derivative, which introduces memory effects and non-local interactions that are not present in local differential operators. 

\begin{lemma}\label{lem1} Let $W_1=\sup _{x \in \mathbb{R}^d}\left|g(x(T))\right|$ and define  
\begin{equation} \underline{V}(t, x) = - W_1 - C (T - t), \, \overline{V}(t, x) = W_1 + C (T - t), \end{equation}
where $C \geq 2C_H$ and $C_H$ is the linear growth constant of the Hamiltonian $H$. Then, $\underline{V}$ is a viscosity subsolution and $\overline{V}$ is a viscosity supersolution of \eqref{eq1} on $[0, T] \times \mathbb{R}^d$. 
\end{lemma}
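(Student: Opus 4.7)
The plan is to verify the viscosity inequalities pointwise, exploiting the fact that $\underline{V}$ and $\overline{V}$ are globally $C^{1,2}$ (affine in $t$ and independent of $x$). By a standard reduction, for each $(t_0,x_0)\in(0,T]\times\mathbb{R}^d$ one can take the candidate as its own test function in Definitions~\ref{def1} and~\ref{def2}: since $V-\phi\equiv 0$ attains both a local maximum and a local minimum everywhere, the viscosity test collapses to a pointwise scalar inequality on the HJB operator applied to the candidate, so the problem reduces to checking the sign of that operator.

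I would then evaluate the three pieces in turn. Differentiation gives $\partial_t\underline{V}\equiv C$ and $\partial_t\overline{V}\equiv -C$, while $\nabla_x\underline{V}=\nabla_x\overline{V}\equiv 0$ by $x$-independence. For the nonlocal term, the kernel representation
\begin{equation*}
\nabla^\alpha_\sigma\underline{V}(t,x)=\mathrm{P.V.}\int_{\mathbb{R}^d}\bigl[\underline{V}(t,x)-\underline{V}(t,y)\bigr]K_\sigma(x,y)\,\mathrm{d}y
\end{equation*}
has an identically zero integrand at every $(t,x)$ (and likewise for $\overline{V}$), so both nonlocal contributions vanish. Assumption~(A5) together with the uniform ellipticity and Lipschitz regularity of $\sigma$ ensures the principal-value integral is well-defined, so the cancellation is rigorous rather than merely formal.

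Combining these, the full HJB operator on $\underline{V}$ reduces to the scalar $C+H(x,0)$, and on $\overline{V}$ to $-C+H(x,0)$. Invoking $|H(x,0)|\le C_H$ from~(A3) and the choice $C\ge 2C_H$ then yields $C+H(x,0)\ge C-C_H\ge C_H\ge 0$ and $-C+H(x,0)\le -C+C_H\le -C_H\le 0$ uniformly in $x$, i.e.\ sign-definite values with the magnitudes and signs needed to establish the subsolution claim for $\underline{V}$ and the supersolution claim for $\overline{V}$ via Definitions~\ref{def1} and~\ref{def2}. Terminal compatibility $\underline{V}(T,x)=-W_1\le g(x(T))\le W_1=\overline{V}(T,x)$ follows immediately from the definition of $W_1$ and the boundedness of $g$ in~(A4).

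The main technical point, though the arithmetic itself is essentially a one-liner, is the annihilation of spatially constant inputs by $\nabla^\alpha_\sigma$: justifying this rigorously rests on $K_\sigma(x,\cdot)$ being a genuine positive kernel whose principal-value integral is well-posed, which is exactly what the ellipticity and Lipschitz conditions on $\sigma$ stated just before the Lévy-Khintchine equivalence provide.
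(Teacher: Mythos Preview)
Your approach is the same as the paper's: because $\underline{V}$ and $\overline{V}$ are globally smooth, you take each as its own test function and verify the viscosity inequality pointwise; the evaluation of $\partial_t$, $\nabla_x$, and the nonlocal term is identical to what the paper does.

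The gap is in the final sign check. Definition~\ref{def1} requires $\partial_t\phi+H(\nabla_x\phi)+\nabla^\alpha_\sigma[\phi]\le 0$ at the test point for a \emph{subsolution}, and Definition~\ref{def2} requires the expression to be $\ge 0$ for a \emph{supersolution}. Your (arithmetically correct) bounds give $C+H(x,0)\ge C-C_H\ge C_H>0$ on $\underline{V}$ and $-C+H(x,0)\le -C_H<0$ on $\overline{V}$, which are the \emph{opposite} signs to what those definitions demand. As written, your argument actually shows $\underline{V}$ is a supersolution and $\overline{V}$ a subsolution, not the roles claimed in the lemma. The paper's own proof has the same defect---its chain ``$C+H(x,0)\le C+C_H\le 2C_H\le C$'' is algebraically impossible, since $C\ge 2C_H$ forces $C+C_H\ge 3C_H>2C_H$---so the inconsistency originates in a sign-convention mismatch between Definitions~\ref{def1}--\ref{def2} and the assignment of $\underline{V},\overline{V}$ in Lemma~\ref{lem1}. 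You should flag this explicitly rather than assert that the signs come out ``as needed.''
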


\begin{proof} 
For $\underline{V}$, one has  $\partial_t \underline{V} = C$, $\nabla_{x} \underline{V} = 0$, and $\nabla^\alpha_\sigma[\underline{V}] = 0$. Substituting into the fractional HJB equation yields
\begin{equation}
\partial_t \underline{V} + H(\nabla_x \underline{V}) + \nabla_\sigma^\alpha [\underline{V}] = C + H(x, 0).
\end{equation}
Using assumption (A3), it follows that
\begin{equation}
    C + H(x, 0) \leq C + C_H\leq 2C_H \leq C.
\end{equation}
Thus
\begin{equation}
\partial_t \underline{V} + H(x, \nabla_x \underline{V}) + \nabla_\sigma^\alpha [\underline{V}] \leq 0,
\end{equation}
which confirms that $\underline{V}$ is a viscosity subsolution. The proof for $\overline{V}$ follows analogously by verifying the inequality for a supersolution.
\end{proof}
 
\begin{theorem}[Existence of Viscosity Solution]\label{thm3} 
Under assumptions \textbf{(A1)}-\textbf{(A5)}, the viscosity solution $V \in \BUC([0, T] \times \mathbb{R}^d)$ of the fractional HJB equation in \eqref{eq1} exists with terminal condition $g(x(T))$. 
\end{theorem}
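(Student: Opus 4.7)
The plan is to establish existence by Perron's method adapted to the nonlocal fractional setting, using the pair of bounded sub/supersolutions already supplied by Lemma \ref{lem1} as the envelope obstacles. The strategy rests on three ingredients: (i) a comparison principle for \eqref{eq1} under \textbf{(A1)}-\textbf{(A5)}; (ii) stability of viscosity sub/supersolution properties under pointwise suprema and semicontinuous envelopes in the presence of $\nabla_\sigma^\alpha$; and (iii) verification of the terminal condition and uniform continuity so that the candidate $V$ lies in $\BUC([0,T]\times\mathbb{R}^d)$.

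First I would prove (or invoke) the comparison principle: if $U\in\USC$ is a bounded subsolution and $W\in\LSC$ is a bounded supersolution of \eqref{eq1} with $U(T,\cdot)\le g\le W(T,\cdot)$, then $U\le W$ on $[0,T]\times\mathbb{R}^d$. The standard doubling-of-variables argument with penalization $|x-y|^2/(2\varepsilon)+\lambda t$ applies, but the nonlocal operator must be split at a small radius $r>0$: the near-field contribution is handled via the second-order jets coming from the Jensen-Ishii lemma, while the far-field tail is controlled using the $L^\infty$ bound on $U-W$ together with \textbf{(A5)}. Assumption \textbf{(A2)} then supplies the Lipschitz estimate on $H$ that absorbs the gradient cross-terms generated by the penalization, while \textbf{(A3)} provides linear growth used to close the estimate.

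Next I would define the Perron candidate
\begin{equation}
V(t,x)=\sup\bigl\{W(t,x):W\in\USC,\ W \text{ is a viscosity subsolution of \eqref{eq1}},\ \underline V\le W\le\overline V\bigr\},
\end{equation}
with $\underline V,\overline V$ given by Lemma \ref{lem1}. Boundedness is immediate from the sandwich. I would then show that the upper semicontinuous envelope $V^*$ is a viscosity subsolution, using the standard stability result together with dominated convergence on $\nabla_\sigma^\alpha$ that is legitimized by \textbf{(A5)}. For the lower semicontinuous envelope $V_*$, the classical Perron bump argument applies: if $V_*$ failed to be a supersolution at some $(t_0,x_0)$, one could add a small $C^{1,2}$ bump supported in a neighborhood of $(t_0,x_0)$, obtain a strictly larger admissible subsolution still dominated by $\overline V$, and contradict the maximality defining $V$. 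Comparison applied to $V^*$ and $V_*$ with common terminal datum $g$ from \textbf{(A4)} yields $V^*\le V_*$, and since $V_*\le V\le V^*$ trivially, $V=V^*=V_*$ is continuous.

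Finally, bounded uniform continuity follows from a translation/comparison argument: for any $(h,k)$, the functions $V(\cdot,\cdot)$ and $V(\cdot+h,\cdot+k)+\omega_g(|h|+|k|)$, where $\omega_g$ is the modulus of continuity of $g$, both solve \eqref{eq1} up to terminal perturbations, so comparison transfers the modulus of $g$ to $V$ uniformly in time. The step I expect to be the main obstacle is the supersolution property of $V_*$ in the Perron bump: the cutoff perturbation must simultaneously respect the singular small-jump behavior of $\nabla_\sigma^\alpha$, for which \textbf{(A5)} is essential, and its heavy-tailed large-jump behavior, for which the boundedness inherited from $\underline V,\overline V$ is essential. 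A naive localization would either destroy the subsolution inequality in the sense of Definition \ref{def1} or violate the kernel-growth condition, so the bump must be chosen carefully, for instance as a mollified indicator of a small ball whose nonlocal contribution is estimated uniformly using the Lipschitz and ellipticity properties of $\sigma$.
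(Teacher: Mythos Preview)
Your proposal is correct and follows essentially the same route as the paper: Perron's method with the barriers $\underline V,\overline V$ of Lemma~\ref{lem1}, showing $V^*$ is a subsolution by stability and $V_*$ a supersolution by the bump/contradiction argument, then closing via comparison. Your outline is in fact more careful than the paper's sketch on two points: you explicitly invoke comparison to obtain $V^*\le V_*$ and hence continuity (the paper merely asserts $V\in\BUC$ as a supremum of BUC functions, which is not sufficient on its own), and you flag the genuine technical issue of localizing the bump against the nonlocal operator. One caveat on your final step: the translation argument for uniform continuity presumes spatial homogeneity, but here both $H$ and the kernel of $\nabla_\sigma^\alpha$ depend on $x$ through $b(x,u)$, $L(x,u)$ and $\sigma(x)$, so a direct shift $V(\cdot,\cdot+k)$ is not a solution of the same equation; you would instead need a comparison with a perturbed supersolution incorporating the Lipschitz moduli of the data, in the spirit of Barles--Imbert.
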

\begin{proof}
    The existence of a viscosity solution is established using Perron's method, which is a standard technique for constructing solutions to non-local HJB equations. The proof follows the methodology detailed in the foundational work by \cite{Barles}, which adapts the method to second-order integro-differential equations. The key steps, as adapted to our specific Hamiltonian, are outlined in Appendix A.
\end{proof}

\subsubsection{Uniqueness of Viscosity Solutions}
Uniqueness of the viscosity solution to the fractional HJB equation in \eqref{eq1} is established by proving a comparison principle, ensuring that no two distinct solutions can satisfy the same terminal condition. 

\begin{theorem}[Comparison Principle]\label{thm1} 
Suppose $V_1$ is a viscosity subsolution and $V_2$ is a viscosity supersolution of \eqref{eq1}. If they satisfy the terminal condition $V_1(T,x)\leq V_2(T,x)$ for all $x\in\mathbb{R}^d$, then $V_1\leq V_2$ on $[0, T] \times \mathbb{R}^d$. 
\end{theorem}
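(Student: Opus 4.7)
The plan is to prove the comparison principle by contradiction, using the doubling-of-variables technique of Crandall--Ishii--Lions adapted to the nonlocal setting of Barles--Imbert. Suppose, toward contradiction, that $M := \sup_{[0,T]\times\mathbb{R}^d}(V_1-V_2) > 0$; by the terminal condition the supremum cannot be attained at $t=T$, so it lies in $[0,T)$. Since $V_1 \in \USC$ and $V_2 \in \LSC$ with $V_2(T,\cdot) - V_1(T,\cdot) \in \BUC(\mathbb{R}^d)$ by \textbf{(A4)}, I would introduce, for small parameters $\epsilon, \eta, \beta > 0$, the penalized functional
\begin{equation*}
\Phi(t,x,y) := V_1(t,x) - V_2(t,y) - \frac{|x-y|^2}{2\epsilon} - \beta\bigl(|x|^2 + |y|^2\bigr) - \eta(T-t),
\end{equation*}
in which $\beta$ enforces coercivity at infinity (so that a maximum exists), $\epsilon$ couples the two arguments, and $\eta$ ensures a strict violation one can propagate. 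Standard BUC arguments show that $\Phi$ attains its supremum at some $(\hat t, \hat x, \hat y)$, and that $|\hat x - \hat y|^2/\epsilon \to 0$ and $\hat x, \hat y$ converge to a common maximizer of $V_1-V_2$ as $\epsilon \to 0$.

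Next, I would invoke the nonlocal Jensen--Ishii lemma for integro-differential equations (Barles--Imbert). With the smooth test function $\phi(t,x,y) = \frac{|x-y|^2}{2\epsilon} + \beta(|x|^2+|y|^2) + \eta(T-t)$, this produces matched subsolution and supersolution inequalities at $(\hat t, \hat x, \hat y)$ with momenta $p_1 = \frac{\hat x - \hat y}{\epsilon} + 2\beta\hat x$ and $p_2 = \frac{\hat x - \hat y}{\epsilon} - 2\beta\hat y$. Subtracting these two inequalities, the time-derivative contributions cancel except for the penalty $\eta$, yielding
\begin{equation*}
\eta \;\le\; \bigl[H(p_1) - H(p_2)\bigr] + \bigl[\nabla_\sigma^\alpha[\phi(\cdot,\hat y)](\hat x) - \nabla_\sigma^\alpha[-\phi(\hat x, \cdot)](\hat y)\bigr].
\end{equation*}

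The third step is to control each bracketed term. For the Hamiltonian difference, assumption \textbf{(A2)} gives $|H(p_1)-H(p_2)| \le L_H|p_1-p_2| = 2L_H\beta|\hat x+\hat y|$, which vanishes as $\beta \to 0$. For the nonlocal term, I would split the integral defining $\nabla_\sigma^\alpha$ (in the L\'evy--Khintchine form) into small jumps $\{|\xi|\le\delta\}$ and large jumps $\{|\xi|>\delta\}$. On the small-jump region, a second-order Taylor expansion of $\phi$ produces an integrand bounded pointwise by $\|D^2\phi\|\,|\sigma|^2|\xi|^2$; since $\int_{|\xi|\le\delta}|\xi|^{2-d-\alpha}\,\mathrm{d}\xi$ is finite for $\alpha\in(1,2]$, this contribution is of order $\|D^2\phi\|\,\delta^{2-\alpha}$, which is small for fixed $\epsilon,\beta$ and $\delta\to 0$. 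On the large-jump region, the integrability of $|\xi|^{-d-\alpha}$ at infinity combined with the uniform bound on $V_1-V_2$ and assumption \textbf{(A5)} gives a bound that vanishes jointly as $(\hat x,\hat y)\to(x^*,x^*)$, by the BUC regularity of the limiting function. Sending $\delta\to 0$, then $\epsilon\to 0$, then $\beta\to 0$, leaves $\eta \le 0$, contradicting $\eta > 0$. Hence $M\le 0$, i.e., $V_1 \le V_2$.

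The main obstacle is handling the state-dependent kernel $K_\sigma(x,y)$ appearing in $\nabla_\sigma^\alpha$: unlike the classical Riesz operator, the coefficient $\sigma(x)$ changes the kernel at each base point, so the two nonlocal terms evaluated at $\hat x$ and $\hat y$ cannot be directly compared by exchanging the integration variable. This is resolved by a careful change of variables $y = x + \sigma(x)\xi$, valid under the uniform ellipticity and Lipschitz regularity of $\sigma$ stated in the derivation preceding \eqref{eq1}, yielding a difference of kernels controlled by $|\hat x-\hat y|$ times a uniform constant. A secondary subtlety is that \textbf{(A2)} provides Lipschitz continuity of $H$ only in the momentum; any residual $x$-dependence in $H$ introduced through $b(x,u)$ and $L(x,u)$ is handled through joint continuity \textbf{(A1)} and the convergence $\hat x - \hat y \to 0$, so that the gap between $H(\hat x, p_1)$ and $H(\hat y, p_2)$ splits into a momentum term (controlled as above) and a state term (vanishing by continuity).
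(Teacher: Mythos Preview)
Your proposal is correct and follows essentially the same doubling-of-variables contradiction argument as the paper's Appendix~B, adapted from Barles--Imbert and Crandall--Ishii--Lions. The differences are minor technical choices: you double only in space and introduce the strict penalization $\eta(T-t)$ together with a coercivity term $\beta(|x|^2+|y|^2)$ and then invoke the nonlocal Jensen--Ishii lemma, whereas the paper doubles in both time and space and applies the sub/super\-solution definitions directly---your version is in fact more careful about existence of the maximizer and about splitting the nonlocal operator into small and large jumps, points the paper treats only heuristically.
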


\begin{proof}
    The proof relies on establishing a comparison principle, which is the cornerstone of uniqueness for viscosity solutions. The argument is a non-trivial adaptation of the classic "doubling the variables" method, extended to handle the non-local fractional operator. The detailed procedure for such operators is established in the literature \cite{Barles} and \cite{Lions}. We summarize the main arguments of this adaptation in Appendix B.
\end{proof}

\begin{corollary}[Uniqueness of Viscosity Solution]\label{cor1} 
Under assumptions \textbf{(A1)}-\textbf{(A5)}, the viscosity solution to the fractional HJB equation in \eqref{eq1} is unique, with terminal condition $g(x(T))$ on $[0, T] \times \mathbb{R}^d$. 
\end{corollary}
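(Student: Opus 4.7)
The plan is to deduce uniqueness directly from the Comparison Principle (Theorem~\ref{thm1}) combined with the existence result of Theorem~\ref{thm3}. The essential observation is that Definition~\ref{def3} characterizes a viscosity solution as being simultaneously a viscosity subsolution and a viscosity supersolution, so any two viscosity solutions sharing the same terminal data can be compared against each other in both directions.

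Concretely, I would first invoke Theorem~\ref{thm3} to secure the existence of at least one viscosity solution $V \in \BUC([0,T] \times \mathbb{R}^d)$ satisfying the prescribed terminal condition $g(x(T))$. To prove uniqueness, I would then let $V_1, V_2 \in C([0,T] \times \mathbb{R}^d)$ be two viscosity solutions of \eqref{eq1} with $V_1(T,x) = V_2(T,x) = g(x(T))$ for all $x \in \mathbb{R}^d$. By Definition~\ref{def3}, each $V_i$ is at once a viscosity subsolution and a viscosity supersolution of \eqref{eq1}. Applying Theorem~\ref{thm1} with $V_1$ playing the role of subsolution and $V_2$ the role of supersolution then yields $V_1 \leq V_2$ on $[0,T] \times \mathbb{R}^d$, since the hypothesis $V_1(T,x) \leq V_2(T,x)$ holds with equality. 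Swapping the roles and applying Theorem~\ref{thm1} a second time gives $V_2 \leq V_1$, and the two inequalities combine to deliver $V_1 \equiv V_2$.

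There is essentially no genuine obstacle at this stage, because the entire analytical burden has already been absorbed into the proof of Theorem~\ref{thm1}: the doubling-of-variables construction, the treatment of the non-local operator $\nabla^\alpha_\sigma$, and the Lipschitz-type control on $H$ supplied by assumption~\textbf{(A2)} are all handled there. The only point requiring care is to confirm that assumptions~\textbf{(A1)}--\textbf{(A5)} apply symmetrically to both candidate solutions, which is immediate since these assumptions constrain only the data $H$, $g$, $\sigma$, and the class of admissible test functions, not the solutions themselves. Hence the corollary follows as a one-line consequence of applying Theorem~\ref{thm1} twice with the roles of subsolution and supersolution interchanged.
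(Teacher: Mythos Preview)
Your proposal is correct and matches the paper's approach: the paper states Corollary~\ref{cor1} without an explicit proof, treating uniqueness as an immediate consequence of the Comparison Principle (Theorem~\ref{thm1}), and your argument---applying Theorem~\ref{thm1} twice with the roles of subsolution and supersolution interchanged---is precisely the standard deduction that justifies calling this a corollary.
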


\begin{theorem}[Existence and Uniqueness of Viscosity Solution]\label{thm2} 
Through Theorems \ref{thm3} and \ref{thm1}, the viscosity solution $V \in \BUC([0, T] \times \mathbb{R}^d)$ of the fractional HJB equation in \eqref{eq1} exists and is unique. 
\end{theorem}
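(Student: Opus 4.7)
The plan is to derive Theorem \ref{thm2} as an essentially formal consequence of the two preceding results, since existence and uniqueness have already been decoupled into Theorem \ref{thm3} and the comparison principle of Theorem \ref{thm1}. First, I would invoke Theorem \ref{thm3} directly: under assumptions \textbf{(A1)}--\textbf{(A5)}, Perron's method produces a viscosity solution $V\in\BUC([0,T]\times\mathbb{R}^d)$ of \eqref{eq1} with terminal datum $V(T,x)=g(x(T))$. The regularity claim is consistent because the barriers $\underline{V},\overline{V}$ from Lemma \ref{lem1} sandwich $V$ and hence deliver boundedness, while uniform continuity of $V$ is transferred from (A4) through the Lipschitz estimate (A2) on $H$ together with the kernel growth condition (A5), which controls the non-local term $\nabla^\alpha_\sigma$ on test functions.

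For the uniqueness half, I would argue by a double application of Theorem \ref{thm1}. Suppose $V_1,V_2\in\BUC([0,T]\times\mathbb{R}^d)$ are both viscosity solutions of \eqref{eq1} with $V_1(T,\cdot)=V_2(T,\cdot)=g(x(T))$. By Definition \ref{def3}, each is simultaneously a subsolution and a supersolution. Viewing $V_1$ as a subsolution and $V_2$ as a supersolution, and using $V_1(T,\cdot)\le V_2(T,\cdot)$ (equality), Theorem \ref{thm1} yields $V_1\le V_2$ on $[0,T]\times\mathbb{R}^d$. Swapping the roles of $V_1$ and $V_2$ and applying the same comparison principle with the reversed terminal inequality gives $V_2\le V_1$. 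Combining the two inequalities produces $V_1\equiv V_2$, so the solution is unique within the class $\BUC([0,T]\times\mathbb{R}^d)$.

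The main obstacle in this combined statement is not the logical bookkeeping but the compatibility of function classes: I must ensure that the solution delivered by the Perron construction of Theorem \ref{thm3} actually lies in the class on which the comparison principle of Theorem \ref{thm1} is applicable, and conversely that the subsolution and supersolution used in uniqueness both belong to $\USC$ and $\LSC$ respectively as required by Definitions \ref{def1} and \ref{def2}. Since $\BUC\subset\USC\cap\LSC$, this inclusion is immediate, and no circularity arises because Theorem \ref{thm1} is established independently of Theorem \ref{thm3}. Once this compatibility is recorded, the proof reduces to the two-line invocation described above, and the conclusion $V\in\BUC([0,T]\times\mathbb{R}^d)$ follows at once.
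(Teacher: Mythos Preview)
Your proposal is correct and matches the paper's approach: Theorem \ref{thm2} is stated without a separate proof, being presented as the direct combination of Theorem \ref{thm3} (existence via Perron's method) and Theorem \ref{thm1}/Corollary \ref{cor1} (uniqueness via the comparison principle). Your additional remarks on the compatibility of function classes and the double application of the comparison principle simply make explicit what the paper leaves implicit.
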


By adapting the Perron's method and the comparison principle to the non-local fractional setting, the existence and uniqueness of viscosity solutions to the fractional HJB equation arising from optimal control problems with Lévy flights are rigorously established. These mathematical results have important implications for the framework's viability as a control design tool. The existence of a solution confirms that the optimal control problem is well-posed, ensuring that a valid value function can be found for all possible channel states. The uniqueness of this solution is also critical, as it ensures that the derived optimal control policy is unambiguous and consistent, which is a necessary condition for the stable and predictable operation of the controller.

\section{Numerical Simulations}\label{S5}
This section presents numerical simulations to validate the effectiveness and robustness of the proposed fractional HJB framework for optimizing base station transmit power control in multi-cell multi-user downlink systems operating simultaneously on the same frequency. The framework seeks a globally optimal policy by solving a centralized control problem over the joint state space of the network. Specifically, considering that each BS $i$ broadcasts to multiple UEs $\ell$, the instantaneous power received by users experiences long-term fading and short-term fading.

The primary objective of the simulations is to assess the ability of DP and value iteration techniques, applied within the fractional HJB framework, to determine optimal transmission power control strategies that balance the trade-off between communication quality and power consumption. The simulations will explore how the system adapts to changing channel conditions and evaluate the performance of the power control strategies derived from the proposed model. Through iterative optimization, we seek to demonstrate the utility of the fractional HJB equation in managing the interference and fading dynamics in a non-Gaussian communication environment.

The long-term fading from BS $i$ to UE $\ell$ is given by
\begin{equation} 
\mathrm{d}\beta_{i\ell} (t) = -a (\beta_{i\ell}(t) + b) \mathrm{d}t + \sigma_{\beta} \mathrm{d}L_{\beta}^\alpha(t), 
\end{equation} 
with $\beta_{i\ell}(0) = (\beta_{i\ell})_0$. 

The short-term fading process, representing the received power $p_{i\ell}(t)$ at UE $\ell$ from BS $i$, follows
\begin{align}
\mathrm{d} p_{i\ell}(t)=&\frac{1}{\tau}\frac{\sigma_p^2}{2} C_\epsilon\left(1-\frac{p_{i\ell}(t-)}{e^{\beta_{i\ell}(t)} p_{\text{in},i}(t)+\rho} \right) \mathrm{d} t\notag\\
&+\frac{\sigma_p}{\sqrt{\tau}} \sqrt{p_{i\ell}(t-)} \mathrm{d} \tilde{L}_{p}^\alpha(t), \quad p_{i\ell}(0)=\left(p_{i\ell}\right)_0,
\end{align}
where the parameters are consistent with \eqref{model short}. $p_{\text{in},i}(t)=p_{0,i}+u_i(t)$ denotes the transmit power after applying control $u_i(t)$. The term $e^{\beta_{i\ell}(t)} p_{\text{in},i}(t)$ represents the effective power after experiencing long-term fading.

Therefore, the state vector in \eqref{state} can be specifically written as
$x(t)=\left(\begin{array}{lll}\beta_{i\ell}(t) & p_{i \ell}(t)\end{array}\right)^{\top}$ with corresponding SDE
\begin{align}\label{sstate}
\mathrm{d} x(t)=&\binom{-a\left(\beta_{i \ell}(t)+b\right)}{\frac{1}{\tau} \frac{\sigma_{p}^2}{2} C_{\epsilon}(1-\frac{p_{i\ell}(t-)}{e^{\beta_{i \ell}(t)} p_{\text{in},t}(t)+\rho})} \mathrm{d} t \notag\\
&+\left(\begin{array}{cc}
\sigma_{\beta} & 0 \\
0 & \frac{\sigma_p}{\sqrt{\tau}} \sqrt{p_{i\ell}(t-)}
\end{array}\right) \mathrm{d} L^\alpha(t),
\end{align}
with $L^\alpha(t)=(\begin{array}{ll}L_{\beta}^\alpha(t) & \tilde{L}_p^\alpha(t)\end{array})^{\top}$.

The signal quality from BS $i$ to UE $\ell$ is characterized by the signal to interference plus noise ratio (SINR)
\begin{equation}
\gamma_{i\ell}(t)=\frac{p_{i\ell}(t)}{\sum_{k=1, k \neq i}^N  p_{k\ell}(t)+\eta},
\end{equation}
where all variables are defined as previously, and $\eta$ is the additive thermal noise.

To formalize the control objective, a stage-wise cost function is defined to balance communication performance and power efficiency. The formulation incorporates three key components:  (i) a rate utility term that rewards higher SINR to improve throughput;  (ii) a penalty term that discourages quality-of-service (QoS) violations when SINR drops below a predefined threshold; and  (iii) a power cost term that penalizes excessive transmission power. The resulting cost function is expressed as
\begin{align}
    &L\left(t, u_i, u_{-i}\right)\notag\\
    &=\sum_{(i,\ell)}\left(-\log_2\left(1+\gamma_{i\ell}(t)\right)+\varsigma\left(r_{\text{th}}-\gamma_{i\ell}(t)\right)^++\lambda p_{\text{in},i}(t)\right),
\end{align}
where $u_i$ denotes the control of BS $i, u_{-i}$ denotes the controls of all other BSs excluding BS $i$, $\varsigma>0$ is the punishment intensity coefficient, $(x)^+=\max(x,0)$ means that if $x>0$, it is $x$, otherwise 0, and $\lambda>0$ represents the cost of controlling the transmission power consumption.

To map this specific problem to the general framework in Section \ref{S4}, we explicitly define the system components. The full state vector $x(t)$ is the concatenation of all ($\beta_{i \ell}(t), p_{i \ell}(t)$) pairs, and the control vector $u(t)$ is the collection of all BS controls $u_i(t)$. The system drift $b(x, u)$ and diffusion $\sigma(x)$ are then block-structured, composed of the terms in  \eqref{sstate} for each independent link. The co-state vector $p=\nabla_x V(t, x)$ is partitioned consistently with $x$, where the components $p_{\beta_{i \ell}}$ and $p_{p_{i \ell}}$ are the partial derivatives of the value function with respect to the states $\beta_{i \ell}$ and $p_{i \ell}$, respectively. The Hamiltonian for the centralized system is given by
\begin{align}
		&H(x, p)\notag\\
        &=\min _{u \in \mathcal{U}}\left\{\sum_{i=1}^N \sum_{\ell=1}^M\left[p_{\beta_{i \ell}} b_{\beta_{i \ell}}(x)+p_{p_{i \ell}} b_{p_{i \ell}}\left(x, u_i\right)\right]+L(x, u)\right\},
\end{align}
where $b_{\beta_{i \ell}}(x)=-a\left(\beta_{i \ell}+b\right)$, 

\noindent $b_{p_{i \ell}}\left(x, u_i\right)=\frac{1}{\tau}\frac{\sigma_p^2}{2} C_\epsilon\left(1-\frac{p_{i\ell}}{e^{\beta_{i\ell}} (p_{0,i}+u_i)+\rho}\right)$.

This instantiated Hamiltonian is the direct input to the fractional HJB equation \eqref{eq1} that we solve numerically.

\subsection{Simulation Parameter Settings}
\begin{table}[t]
\centering
\caption{Explanation of Numerical Simulation Parameters}
\label{Table1}
\begin{tabular}{cccccc}
\hline
\textbf{Parameter} & \textbf{Value} & \textbf{Parameter} & \textbf{Value} & \textbf{Parameter} & \textbf{Value} \\
\hline
$N$ & 3 & $M$ & 3 & $u_{\text{step}}$ & 1 \\
$p_{\text{max}}$ & 15 & $p_{\text{min}}$ & 0  & $\eta$ & 1\\
$N_t$ & 100 & $dt$ & 0.1 & $T$ & 10 \\
$r_{\text{th}}$ & 1.5 & $\varsigma$ & 1 & $\lambda$ & 0.1 \\
$a$ & 0.1 & $e^{b_{i\ell}},i=\ell$ & 0.5 & $e^{b_{i\ell}},i\neq\ell$ & 0.1\\
$\alpha$ & 1.8 & $\sigma_{\beta}$ & 0.1 & $\sigma_{p}$ & 0.3 \\
$C_{\epsilon}$ & 1 & $\tau$ & 0.01 & $\rho$ & $10^{-3}$ \\
\hline
\end{tabular}
\end{table}

In all numerical experiments, each BS’s transmit power is constrained to the interval [0,15] W, which reflects typical pico-/microcell deployment scenarios as recommended in 3GPP TS 36.101 \cite{3GPP} and Goldsmith \cite{Goldsmith}. The median large-scale fading gains are determined by the mean-reversion parameter $b_{i \ell}$ in the long-term fading SDE, which sets the long-term average of the process. For the desired (serving) link, the median gain is set to 0.5 ($\approx$ –3 dB), while for interfering links it is set to 0.1 ($\approx$ –10 dB), consistent with urban microcell measurements reported in Rappaport \cite{Rappaport}. The SINR threshold is set to $r_{\mathrm{th}}=1.5$ ($\approx$ 1.76 dB) for each UE. This value is selected as it represents a standard baseline for ensuring Quality of Service (QoS); it is the minimum linear-scale SINR required to support robust modulation schemes such as QPSK with a Block Error Rate (BLER) of 10\% or less, as detailed in Goldsmith \cite{Goldsmith} and TS 36.101 \cite{3GPP}.

To obtain the optimal control strategy, value iteration is conducted over 100 rounds using an explicit action space. This action space is constructed from a set of discrete power adjustments, where the control input $u_i(t)$ is selected from a discrete set of adjustments within a range of $\pm u_{\text {step}}$, where $u_{\text{step}}=1 \mathrm{W}$. Other parameters related to the SDEs (e.g., $\alpha$, $\sigma_{\beta}$, $\sigma_{\text{out}}$, $\tau$) are selected within numerically stable ranges to reflect heavy-tailed noise characteristics while ensuring that the system remains physically realistic and computationally tractable.
Table \ref{Table1} summarizes the key parameters used in the simulation.

\subsection{Simulation Results and Discussions}
\begin{figure*}[htbp]
\centering
\subfloat[Initial]{\includegraphics[width=2.3in]{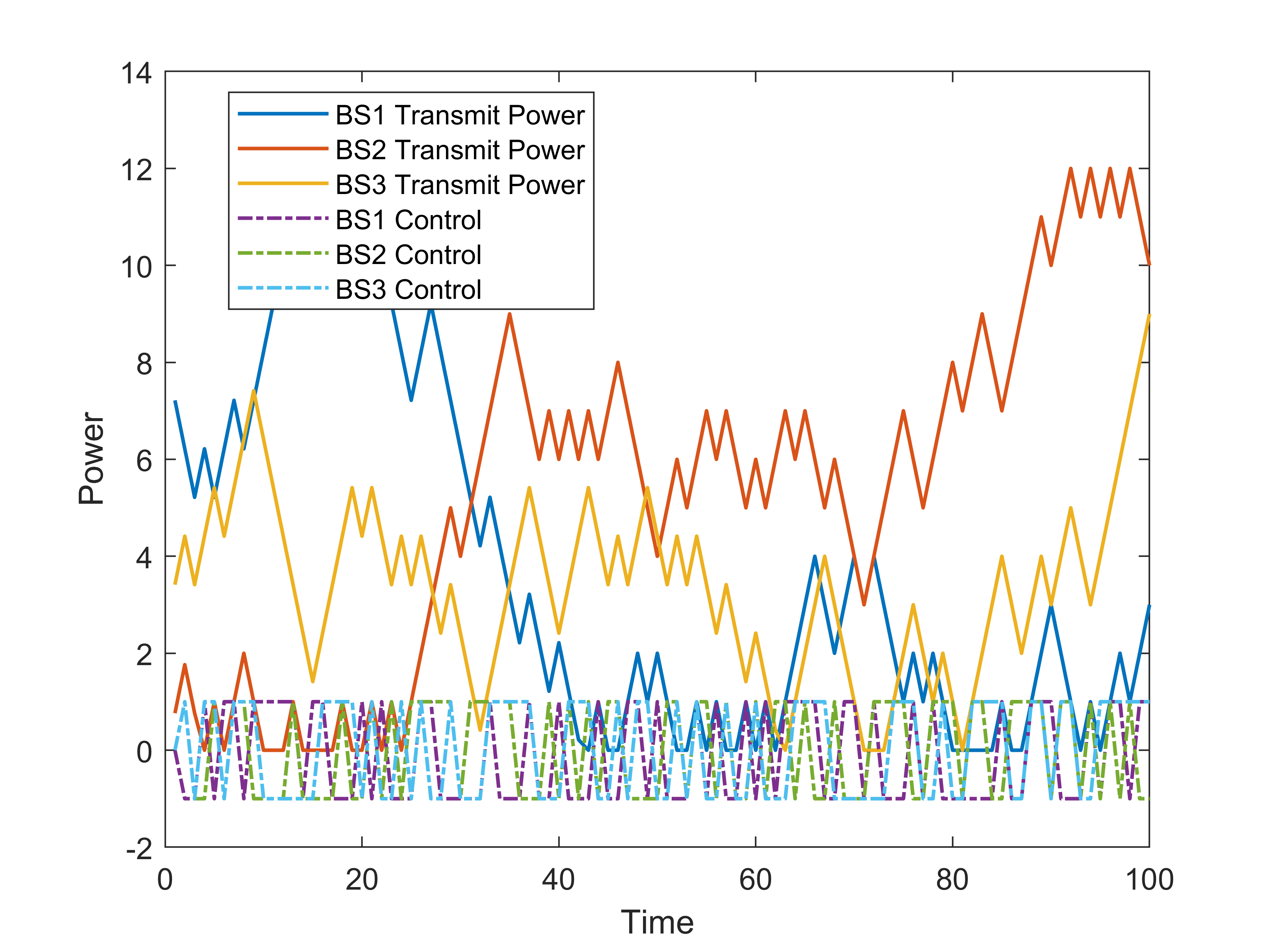}%
\label{fig1}}
\hfil
\subfloat[50 rounds]{\includegraphics[width=2.3in]{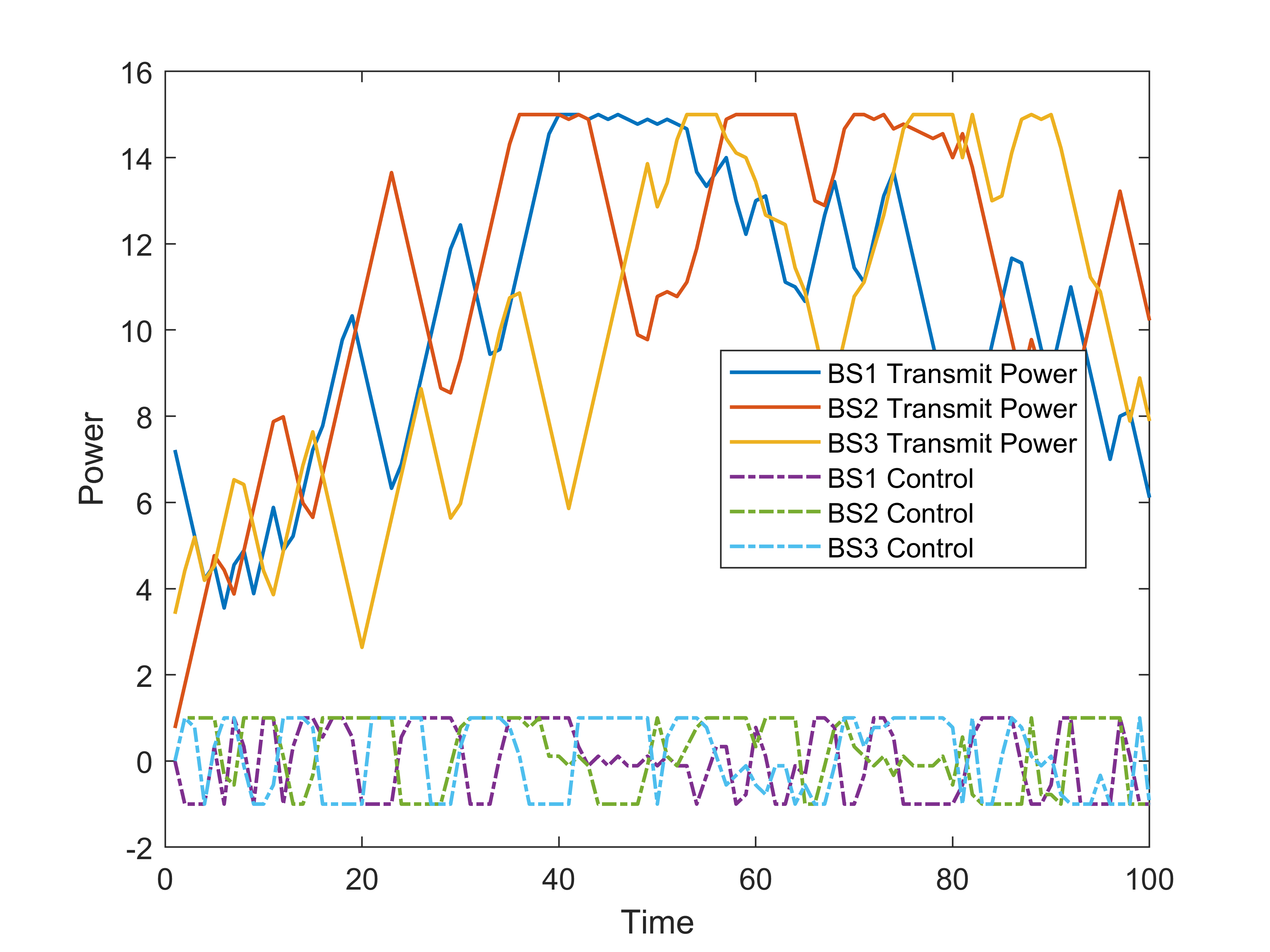}%
\label{fig2}}
\hfil
\subfloat[100 rounds]{\includegraphics[width=2.3in]{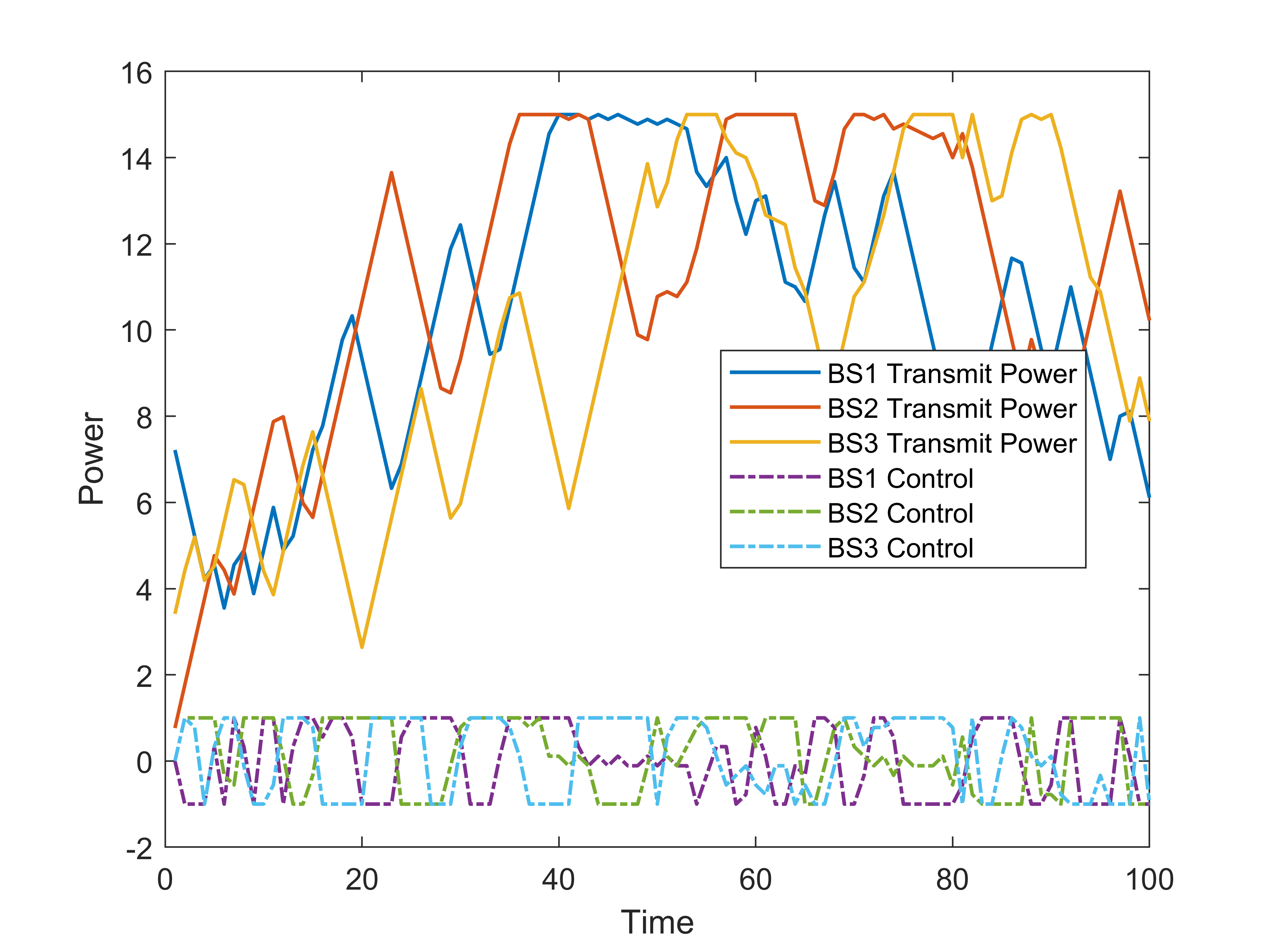}%
\label{fig3}}
\caption{Evolution of the BS Transmit Power and Control Actions after DP Iterations with $\alpha=1.8$.}
\label{Fig1}
\end{figure*}

\begin{figure*}[htbp]
\centering
\subfloat[Value]{\includegraphics[width=2.3in]{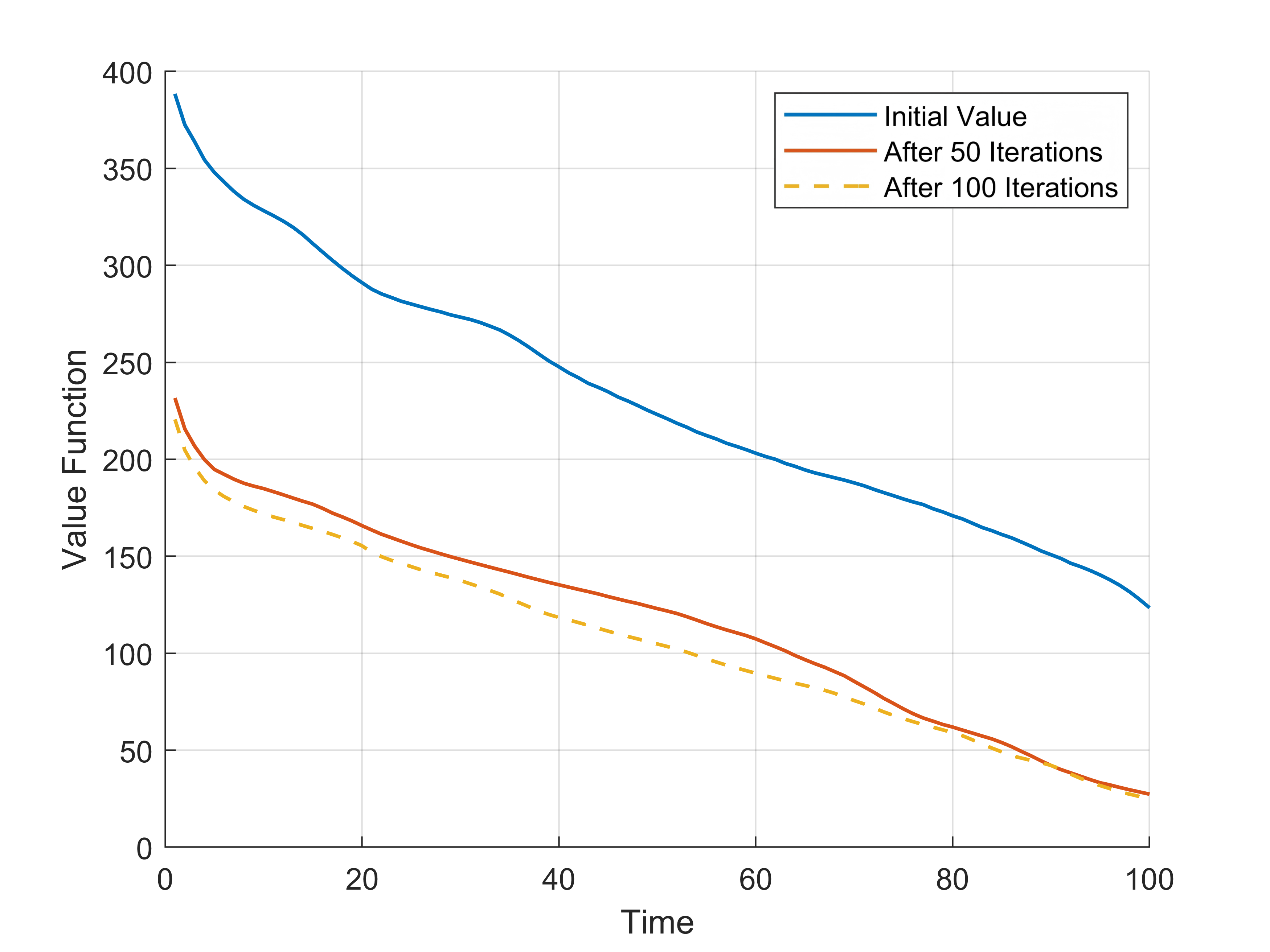}%
\label{fig4}}
\hfil
\subfloat[SINR: 50 rounds]{\includegraphics[width=2.3in]{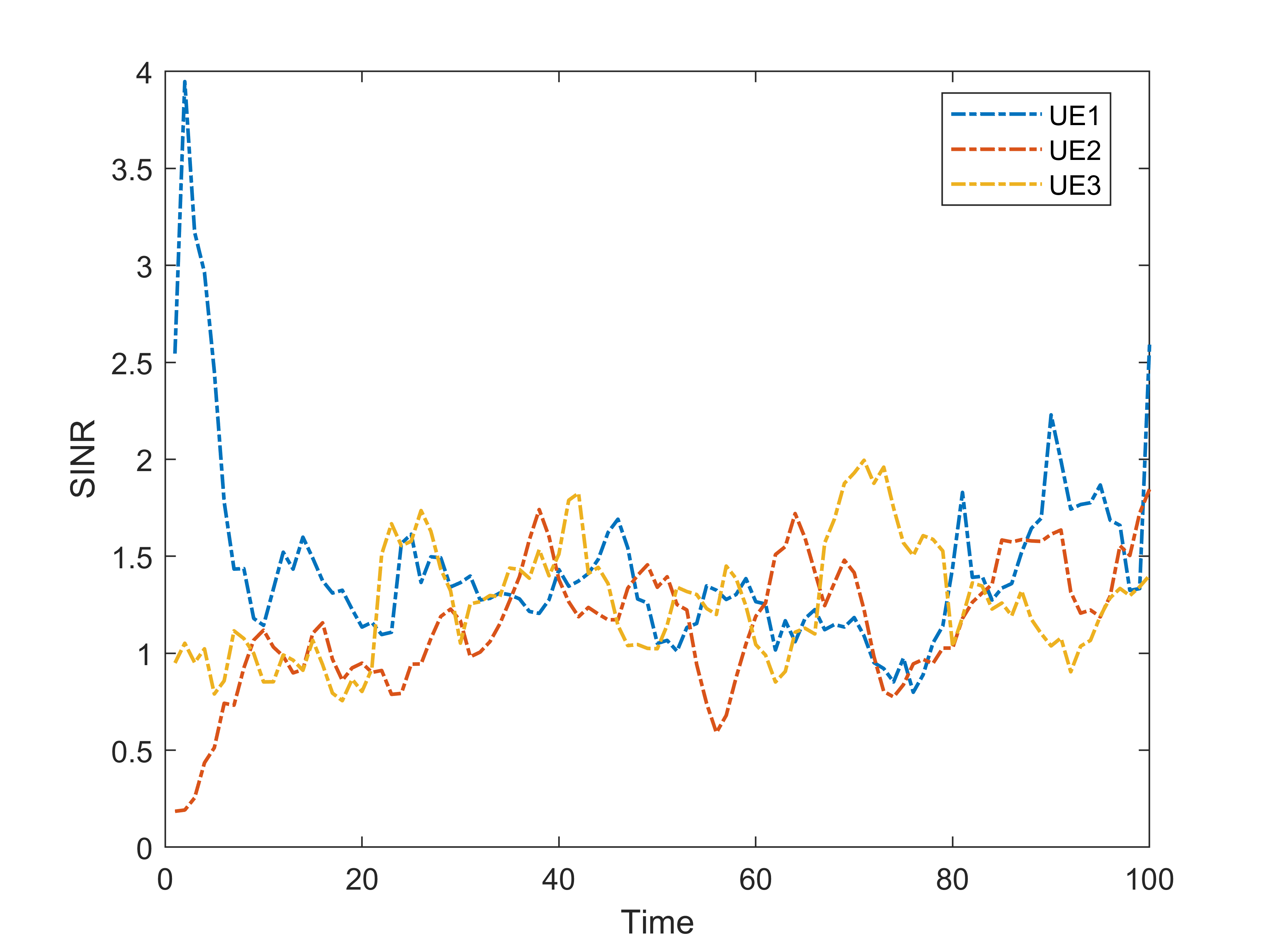}%
\label{fig5}}
\hfil
\subfloat[SINR: 100 rounds]{\includegraphics[width=2.3in]{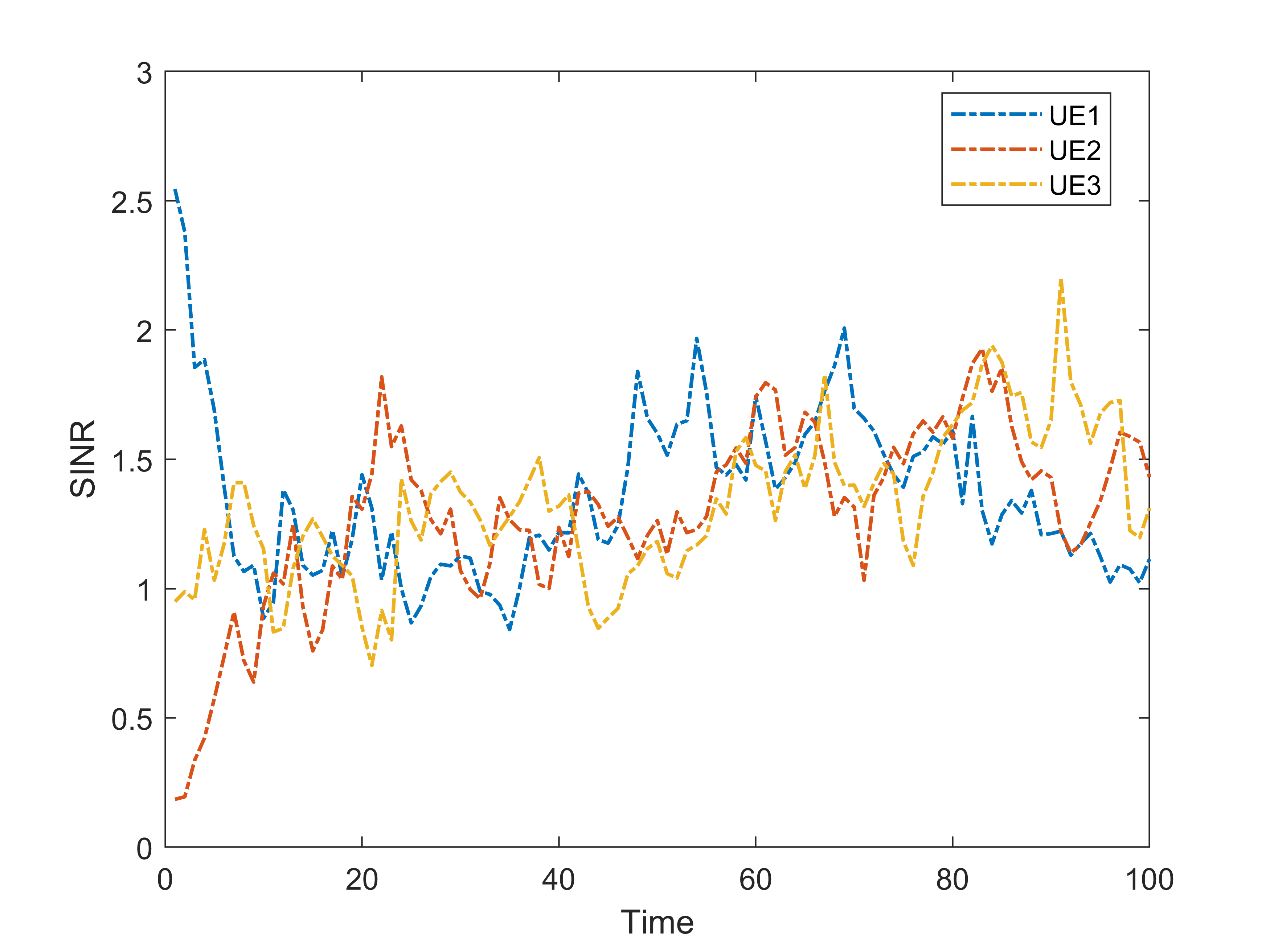}%
\label{fig6}}
\caption{Evolution of the Value Function and SINR Trajectories after DP Iterations with $\alpha=1.8$.}
\label{Fig2}
\end{figure*}

Fig.~\ref{Fig1} illustrates the evolution of each base station’s transmission power and control actions at the initial state, after 50 iterations, and after 100 iterations. Initially, actions are nearly random, leading to fluctuating and uncoordinated power levels, resulting in inefficient energy usage and increased interference.

After 50 iterations, as shown in Fig.~\ref{fig2}, each base station begins adjusting its power more proactively in response to both local fading and interference from others. The power trajectories exhibit fewer abrupt swings, and control actions become more structured. Notably, the optimal policy exhibits a highly coordinated power adaptation pattern: when one base station increases its power, others often reduce theirs. This behavior is a direct result of the centralized optimization, where the dynamic programming algorithm finds a globally optimal strategy to minimize the total system cost. The controller learns to avoid simultaneous high-power transmissions that would degrade the overall network SINR through mutual interference.

After 100 iterations, the control strategy converges as shown in Fig.~\ref{fig3}: power profiles stabilize into bounded ranges with coordinated peak–valley alternation. This indicates that the converged control policy now effectively accounts for the multistep channel dynamics and interference feedback for each base station. 
Control actions are now consistently applied to maintain SINR above threshold while minimizing power cost. This power balancing behavior reflects the globally optimal solution, where the centralized policy adapts each base station's transmission strategy based on the joint state of all channel conditions and the interference they induce on one another.

Fig. \ref{fig4} illustrates that the value function $V(t)$ gradually decreases as the number of iterations increases. As the cost function is composed entirely of non-negative components—including SINR outage penalties, power consumption costs, and fairness violations—the optimal control policy aims to minimize the cumulative cost while ensuring communication quality. During the backward dynamic programming process, the value function at each time step represents the expected future cost starting from that point. As time progresses toward the terminal horizon \(t \to T\), the remaining controllable duration diminishes, and the influence of current actions on future performance gradually weakens. Consequently, the value function naturally converges to zero at the terminal time \(T\), where no further cost can accrue and the terminal condition \(V(T) = 0\) is imposed as a boundary condition.

Figs. \ref{fig5} and \ref{fig6} show the SINR trajectories of the three UEs after DP iterations. The apparent lack of smoothness in these trajectories is a direct manifestation of two underlying factors. The principal factor is the stochastic nature of the channel model, which is driven by Lévy processes. Unlike the continuous sample paths generated by Brownian motion, the sample paths of a Lévy process are càdlàg (right-continuous with left limits) and possess inherent jump discontinuities. These jumps, which model impulsive physical phenomena, are the source of the abrupt variations observed in the SINR. A secondary factor is the numerical implementation of the optimal control policy via value iteration. The discretization of the action space leads to a quantized control policy, where the transmission power is adjusted in discrete steps. This results in piecewise constant control inputs, which contribute to the non-differentiable character of the resulting SINR trajectories. Nevertheless, the optimized power control strategy adapts in real time, effectively mitigating the impact of channel discontinuities. As the iteration progresses from 50 to 100 rounds, the SINR curves of all users become more balanced and less volatile. Despite the presence of heavy-tailed, non-Gaussian fading, the control policy demonstrates its robustness by effectively counteracting the majority of fluctuations and maintaining the user SINRs in a stable region around the predefined threshold of $r_{th}=1.5$. While occasional, sharp dips below the threshold can occur—an inherent characteristic of impulsive Lévy noise—the controller successfully maintains the target QoS for the vast majority of the operational time, showcasing its resilience.

\begin{figure}[t]
\centering
\subfloat[Noise Multiplier = 0.1]{\includegraphics[width=3.0in]{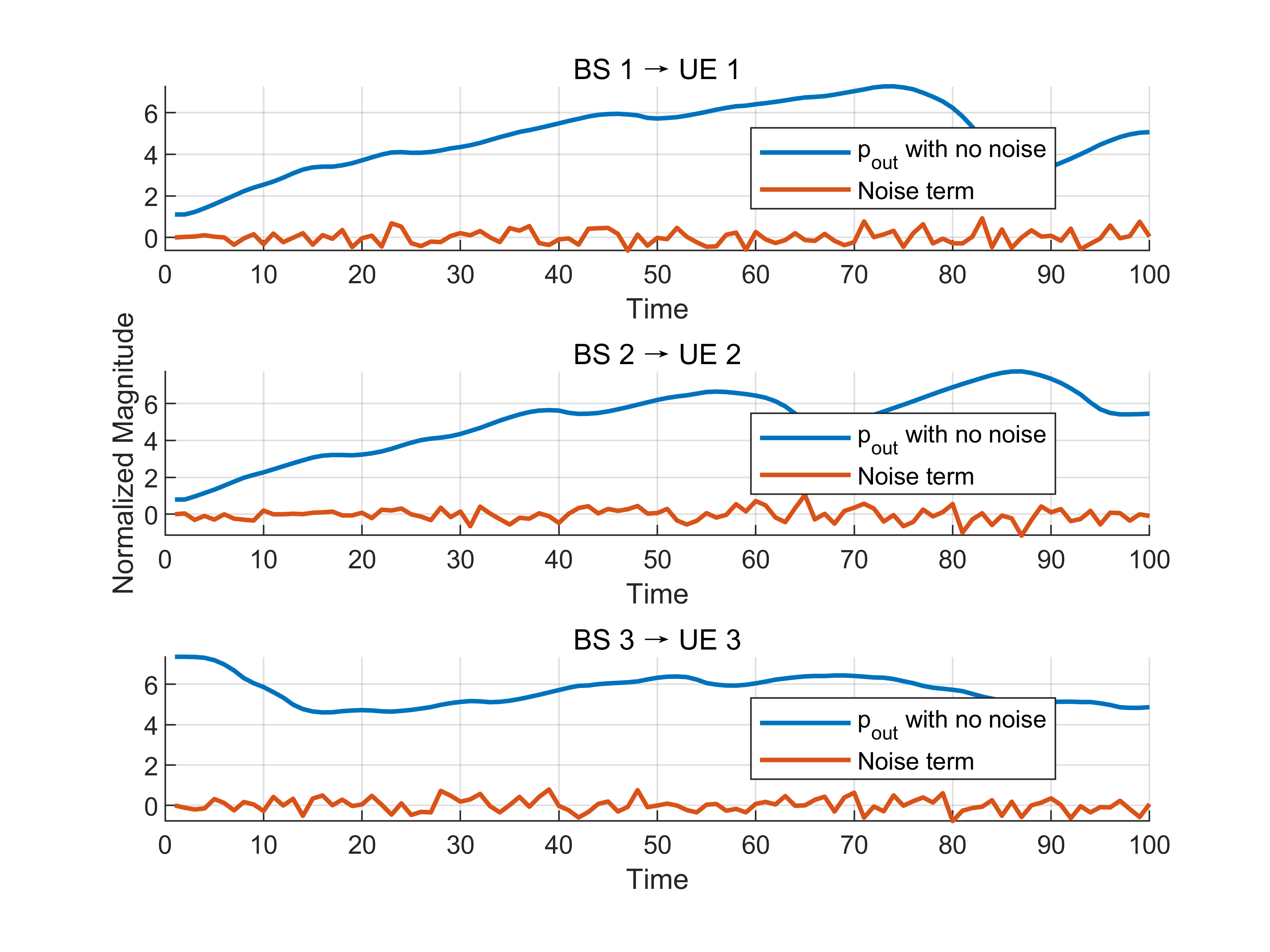}%
\label{fig7}}
\hfil
\subfloat[Noise Multiplier = 0.5]{\includegraphics[width=3.0in]{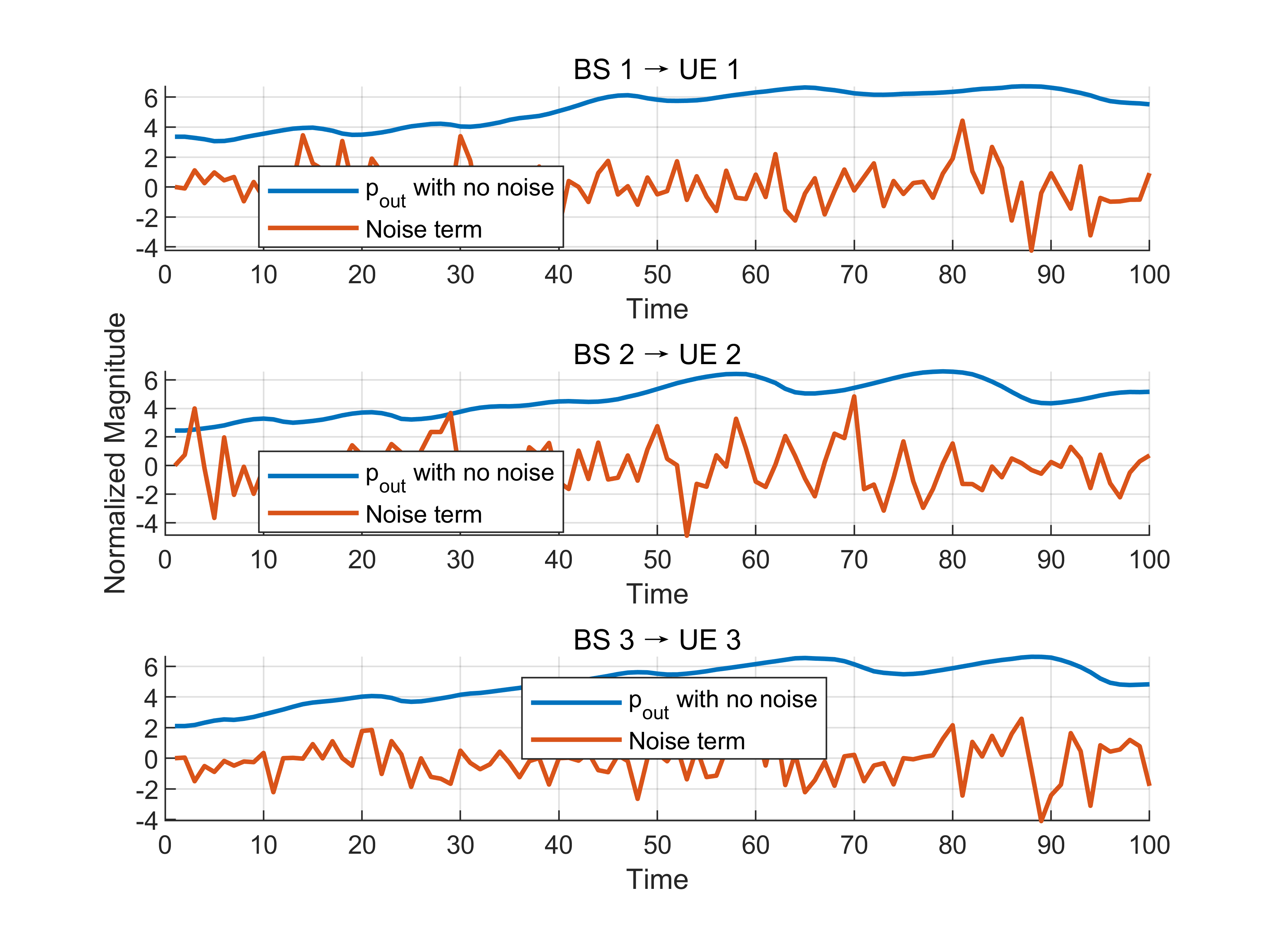}%
\label{fig8}}
\caption{Comparison between Received Power \(p_{i\ell}(t)\) and Stochastic Noise Term under Different Noise Intensities}
\label{Fig3}
\end{figure}

Fig. \ref{Fig3} compares the received power \(p_{i\ell}(t)\) and the corresponding Lévy-driven noise term under two noise intensities. When the noise scaling coefficient is 0.1, the stochastic fluctuations remain bounded and consistently smaller than the received signal power, enabling the controller to suppress impulsive interference and maintain robust system performance. However, when the coefficient increases to 0.5, the noise term exhibits large and unstable jumps, and the absolute value of the fluctuation amplitude exceeds the power. In this regime, the control strategy is overwhelmed by heavy-tailed disturbances, and \(p_{i\ell}(t)\) loses physical significance as a controllable quantity. This contrast highlights the necessity of constraining noise intensity in Lévy-driven environments to preserve optimization reliability and ensure stable power control.

\subsection{Comparative Analysis: Lévy vs. Gaussian Channels}
\begin{figure*}[htbp]
\centering
\subfloat[Initial]{\includegraphics[width=2.3in]{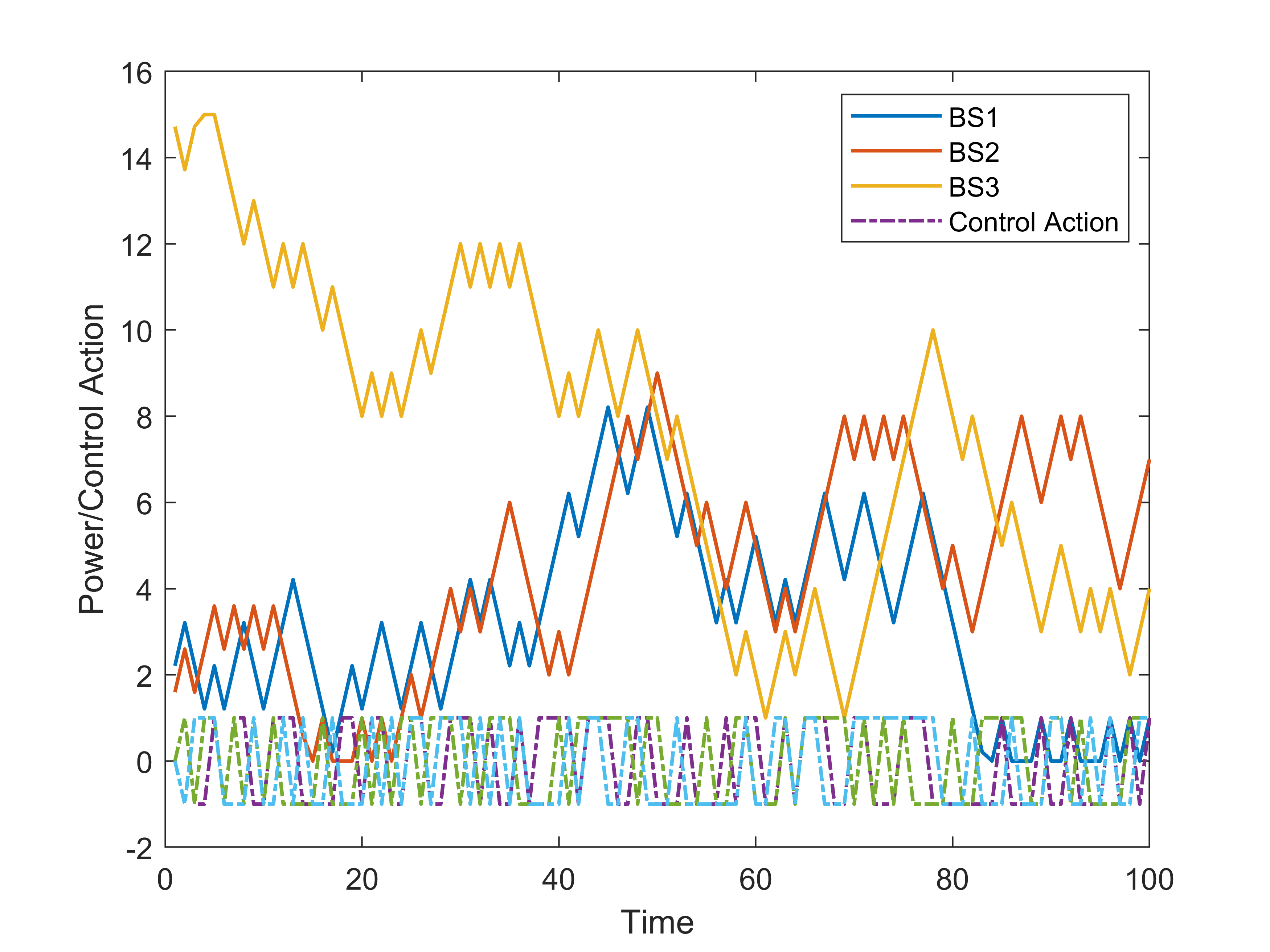}%
\label{fig9}}
\hfil
\subfloat[100 rounds]{\includegraphics[width=2.3in]{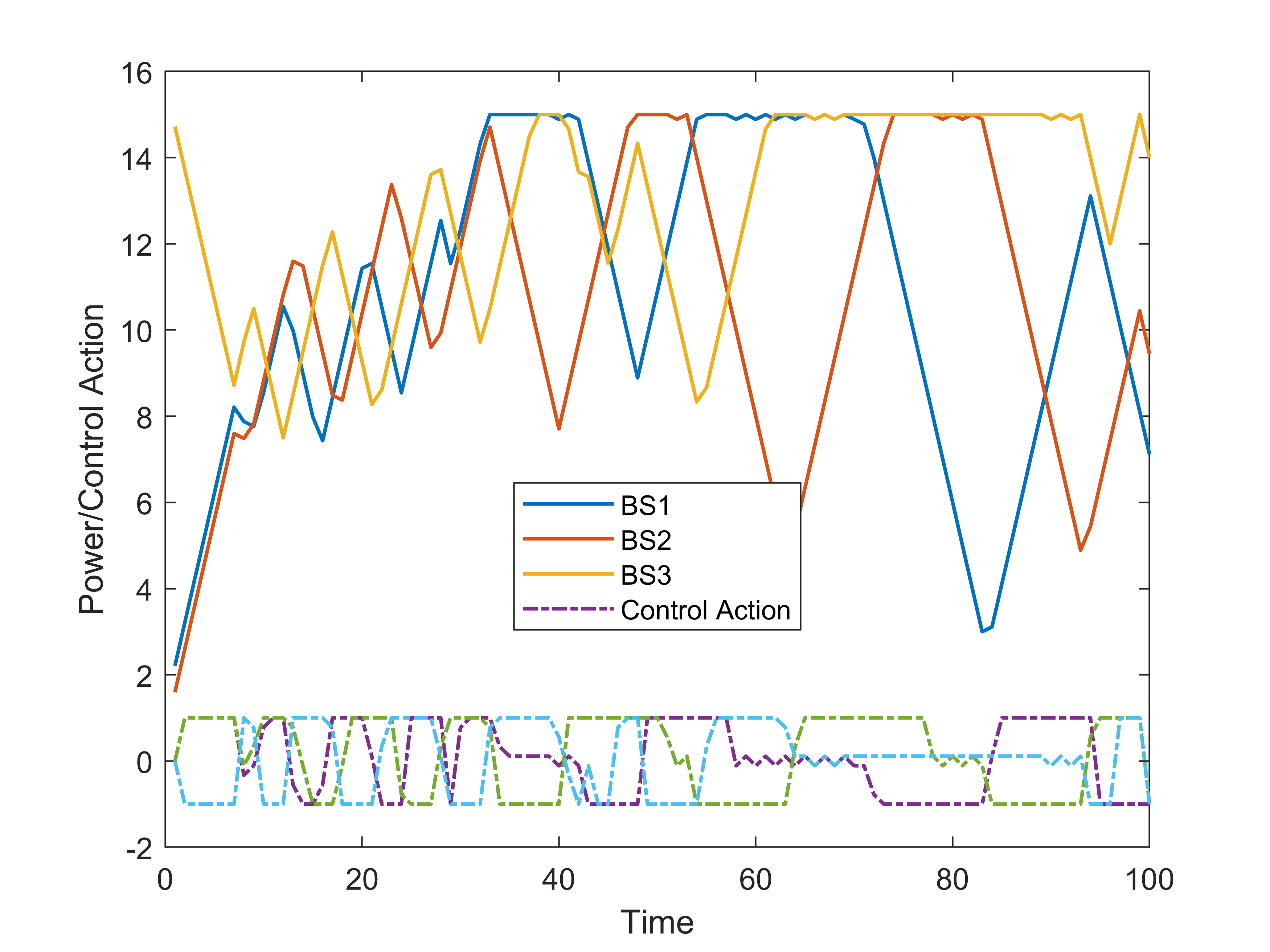}%
\label{fig10}}
\hfil
\subfloat[100 rounds]{\includegraphics[width=2.3in]{0Pin100.png}%
\label{fig11}}
\caption{Evolution of the BS Transmit Power and Control Actions after DP Iterations. (a) and (b) with $\alpha=2$; (c) with $\alpha=1.8$.}
\label{Fig4}
\end{figure*}

\begin{figure*}[htbp]
\centering
\subfloat[Value]{\includegraphics[width=2.3in]{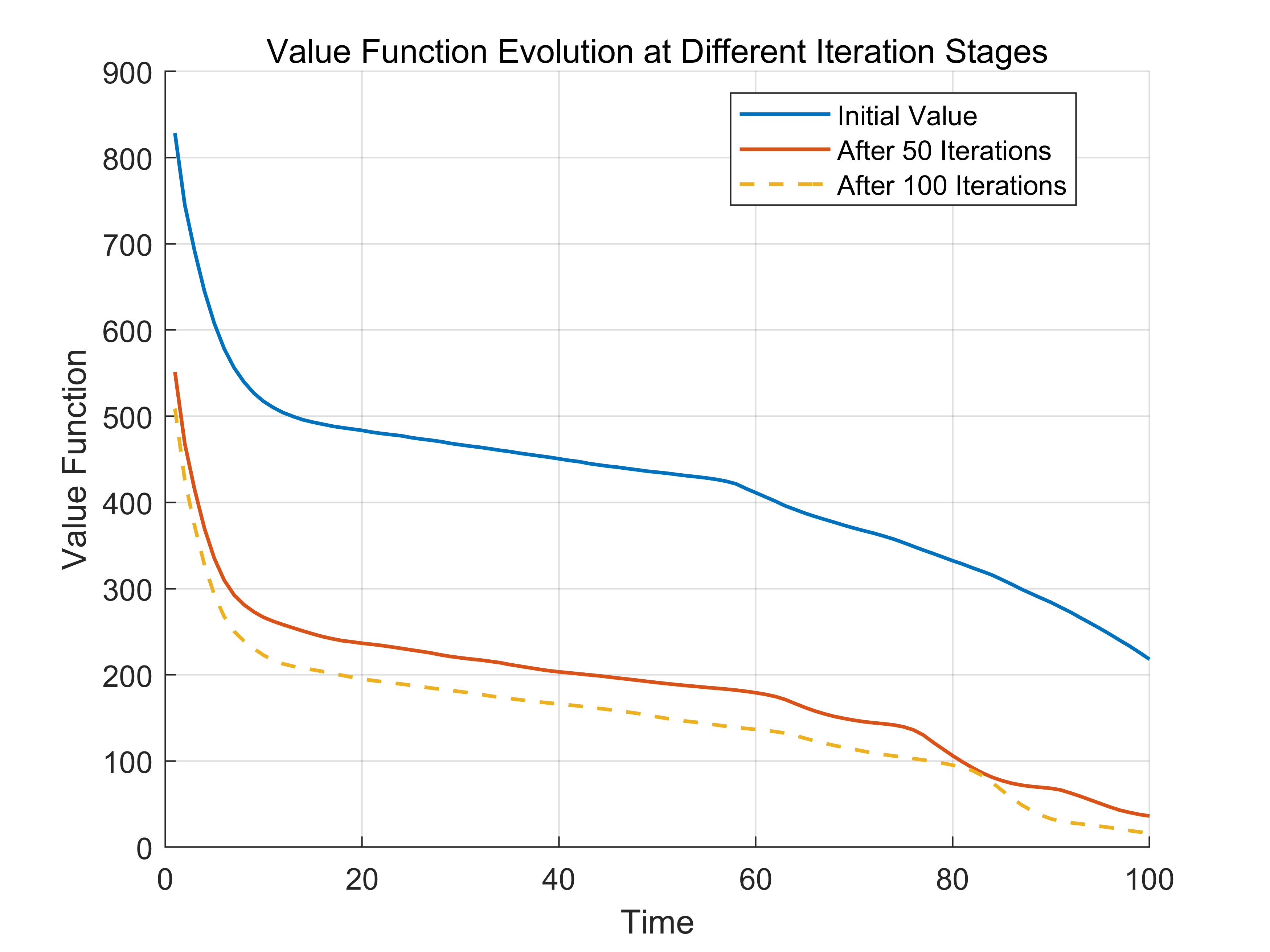}%
\label{fig12}}
\hfil
\subfloat[SINR: 100 rounds]{\includegraphics[width=2.3in]{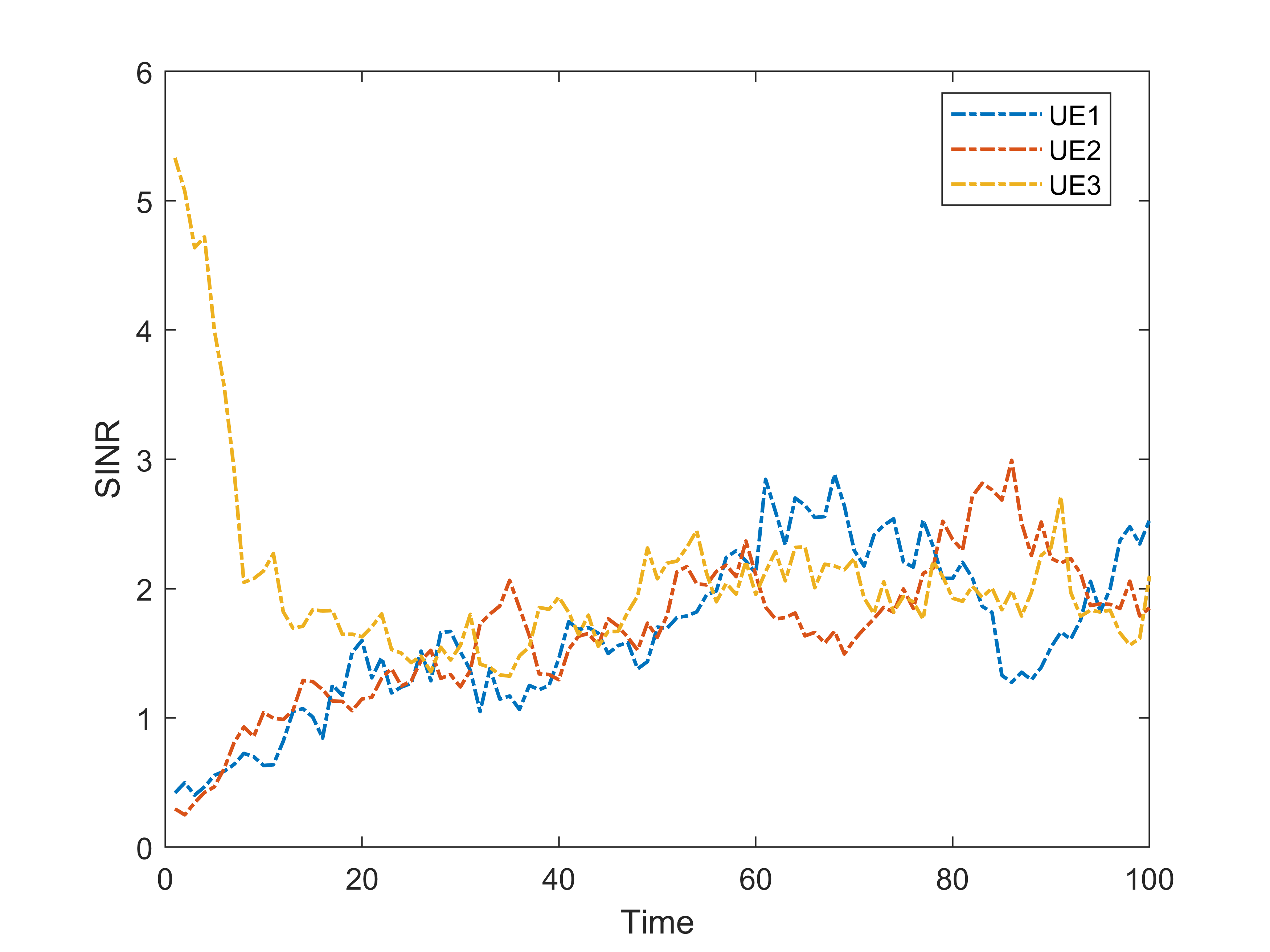}%
\label{fig13}}
\hfil
\subfloat[SINR: 100 rounds]{\includegraphics[width=2.3in]{0SINR100.png}%
\label{fig14}}
\caption{Evolution of the Value Function and SINR Trajectories after DP Iterations. (a) and (b) with $\alpha=2$; (c) with $\alpha=1.8$.}
\label{Fig5}
\end{figure*}
To isolate the contributions of our framework, we conducted a comparative analysis between the proposed S$\alpha$S Lévy-based controller ($\alpha$=1.8) and its conventional Gaussian limit ($\alpha$=2).

For this comparison, it is important to clarify the specific cost function minimized by the value iteration algorithm. The cost is composed of a running cost $L(t)$ and a terminal cost $V_T$. The running cost includes a quadratic penalty on power increases, a linear penalty for SINR falling below a threshold of $r_{\text{th}}=1.5$ (with weight $\varsigma=1$), and a fairness penalty based on SINR variance. The terminal cost includes similar terms and a reward for the final sum-rate. To ensure stable policies in both distinct environments, the sum-rate reward weight in the terminal cost was set to \emph{1} for the Gaussian case and \emph{0.1} for the Lévy case.

The converged power control policies reveal an adaptation to the noise statistics. Under Gaussian noise (Fig. \ref{fig10}), the control policy is characterized by relatively moderate power adjustments. In contrast, the policy under Lévy noise (Fig. \ref{fig11}) is notably more dynamic, a direct consequence of the heavy-tailed nature (infinite variance) of the S$\alpha$S process. To guard against the higher probability of high-magnitude, impulsive jumps, the controller must adopt a more aggressive strategy, making frequent power adjustments to maintain system stability in an environment with extreme statistical uncertainty.

The resulting SINR performance, depicted in Fig. \ref{Fig5}, demonstrates the advantage of the fractional framework in managing channels with Lévy properties. The trajectory under Gaussian noise is comparatively smooth (Fig. \ref{fig13}). The SINR in the Lévy environment (Fig. \ref{fig14}), however, reflects the jump discontinuities inherent to the driving process. The crucial demonstration of robustness is not in achieving a smooth path, but in the control action's ability to prevent these jumps from causing prolonged service outages. Despite being subjected to high-magnitude disturbances from the heavy tails of the S$\alpha$S distribution, the control policy works to guide the SINR back toward the target operating region after a disruption, showcasing its stabilizing effect against the non-Gaussian phenomena it was designed to handle.

\section{Conclusion}\label{S6}
This paper proposes a general optimal control framework for wireless systems operating under heavy-tailed and non-stationary fading, based on a novel wireless channel model driven by symmetric $\alpha$-stable Lévy processes. A fractional Hamilton-Jacobi-Bellman equation incorporating a spatial Riesz fractional operator is formulated, thereby enabling the modeling of non-local interactions and memory effects induced by impulsive dynamics. The existence and uniqueness of viscosity solutions are formally established, ensuring the theoretical soundness of the proposed formulation. In the numerical simulation, the fractional HJB–based power-control policy converges within 100 iterations to coordinated transmit-power profiles in a multi–base-station, multi-user downlink setting. This policy demonstrates robust performance under heavy-tailed, non-Gaussian interference by successfully maintaining the UE's SINR within a stable operating range around the 1.76 dB threshold, effectively mitigating the impact of large, Lévy-driven jumps.


\appendix

\section*{Appendix A: Proof of the existence of viscosity solutions}
Define the set of admissible viscosity subsolutions
\begin{align}
\mathcal{S} =& \left\{ w \in \USC([0, T] \times \mathbb{R}^d)  \bigg|  \underline{V} \leq w \leq \overline{V} , \text{$w$ is a viscosity}\right. \notag\\
& \text{subsolution of \eqref{eq1}},\Big. w(T, x) \leq g(x(T)) \Big\}.
\end{align}
Further, define the candidate solution as the upper envelope of subsolutions:
\begin{equation}\label{eq16}
V(t, x) = \sup_{w \in \mathcal{S}} w(t, x). \end{equation}
The goal is to show that $V$ is a viscosity solution of \eqref{eq1}.

\textbf{Step I.} 
Let $V^*$ be the upper semicontinuous envelope of $V$. Take $(t_0, x_0) \in(0, T) \times \mathbb{R}^d$, and a test function $\phi \in C^{1,2}\left((0, T] \times \mathbb{R}^d\right)$ such that $V^*-\phi$ attains a local maximum at $(t_0, x_0)$. 

By the definition of $u^*$ and the supremum in \eqref{eq16}, for any $\eta>0$, there exists $w_{\varepsilon} \in \mathcal{S}$ such that
\begin{equation} V^*(t_0, x_0) - \eta < w_\varepsilon(t_0, x_0) \leq V(t_0, x_0). \end{equation}
Given that $w_{\varepsilon} \leq V\leq V^*$, it follows that
\begin{equation}
    w_\epsilon(t, x)\leq V^*(t, x).
\end{equation}
Then, $w_{\varepsilon}-\phi$ attains a local maximum at $\left(t_0, x_0\right)$. Since $w_{\varepsilon}$ is a viscosity subsolution, it holds that
\begin{align} 
\partial_t \phi + H\left(\nabla_{x} \phi\right)+ \nabla^\alpha_\sigma[\phi] \leq \eta,\quad\text{at}~(t_0, x_0). \end{align}

Letting $\eta \rightarrow 0$, it can be concluded that $V$ is a viscosity subsolution.

\textbf{Step II.} Similarly, it is necessary to demonstrate that the lower semicontinuous envelope $V_{*}$ of $u$ is a viscosity supersolution.

Suppose, for contradiction, that there exists a test function $\phi$ such that $V_{*}-\phi$ attains a local maximum at $(t_0, x_0)$, and
\begin{align}
\partial_t \phi + H\left(\nabla_{x} \phi\right) + \nabla^\alpha_\sigma[\phi] < 0. \end{align}

Let us introduce the following function
\begin{equation} \tilde{V}(t, x) = \begin{cases} \max\{ V(t, x), \phi(t, x) \}, & \text{if } t \leq t_0, \\ V(t, x), & \text{if } t > t_0. \end{cases} \end{equation}

Given that $V_{*} - \phi$ attains a strict local minimum at $(t_0, x_0)$, the equality $\phi(t_0, x_0) = V_{*}(t_0, x_0)$ necessarily follows. Therefore, $\tilde{V}$ coincides with $V$ except possibly in a neighborhood of $(t_0, x_0)$. It is demonstrated that $\tilde{V} \in \mathcal{S}$: $\tilde{V}$ is upper semicontinuous, satisfies $\tilde{V}(T,x)(x) \leq V(T,x)$, and is a viscosity subsolution of \eqref{eq1}.

However, at $(t_0, x_0)$, the equality $\tilde{V}(t_0, x_0) = \phi(t_0, x_0) > V(t_0, x_0)$ holds, which contradicts the definition of $V$ as the supremum of subsolutions in $\mathcal{S}$. Hence, the initial hypothesis is not valid, and $V_{*}$ is a viscosity supersolution.

\textbf{Step III.} Finally, it is necessary to verify whether the terminal conditions are met, that is to say,  $V(T, x)=g(x(T))$.

Since all $w \in \mathcal{S}$ satisfy $w(T, x) \leq g(x(T))$, it is shown that
\begin{equation} V(T, x) = \sup_{w \in \mathcal{S}} w(T, x) \leq g(x(T)). \end{equation}

On the other hand, from the supersolution property of $V_{*}$, this necessarily leads to 
\begin{equation}
V(T, x) \geq g(x(T)). 
\end{equation}

Therefore, the conclusion is drawn that 
\begin{equation}
V(T, x) = g(x(T)). 
\end{equation}

Putting together the above results, it can be established that $V$ is a viscosity solution of the fractional HJB equation in \eqref{eq1} satisfying the terminal condition $g(x(T))$. Moreover, since $V$ is the supremum of bounded and uniformly continuous functions, it is itself bounded and uniformly continuous.

\section*{Appendix B: Proof of the uniqueness of viscosity solutions}
Assume, contrary to the claim, that
\begin{equation}
    W_2 := \sup_{(t, x) \in [0, T] \times \mathbb{R}^d} \left( V_1(t, x) - V_2(t, x) \right) > 0.
\end{equation}
Since $V_1$ is upper semicontinuous and $V_2$ is lower semicontinuous, the supremum $W_2$ is attained at some $(\hat{t}, \hat{x}) \in[0, T] \times \mathbb{R}^d$. Standard arguments (and the requirement that $V_1(T, \cdot) \leq V_2(T, \cdot)$ ) imply that $\hat{t} \neq T$, since otherwise it would contradict the terminal condition. Hence, $\hat{t} \in[0, T)$.

The penalized function is defined by
\begin{align}
\Phi_\varepsilon(t, x, s, y) = V_1(t, x) - V_2(s, y) - \frac{| t - s |^2+|| x - y ||^2}{2 \varepsilon^2}, 
\end{align}
where $\varepsilon>0$ is a small parameter.

According to the upper and lower semicontinuity, there exists $(t_\varepsilon,x_\varepsilon, s_\varepsilon, y_\varepsilon)$ that maximizes $\Phi_\varepsilon$. As $\varepsilon\to 0$,
\begin{equation}
t_\varepsilon, s_\varepsilon \to \hat{t}, \quad x_\varepsilon, y_\varepsilon \to \hat{x},
\end{equation}
and
\begin{equation}
V_1(t_\varepsilon, x_\varepsilon) - V_2(s_\varepsilon, y_\varepsilon) \to W_2. 
\end{equation}

The following test functions are considered for further analysis:
\begin{align}
\phi^1(t, x) &= \frac{| t - s_\varepsilon |^2+| x - y_\varepsilon |^2}{2 \varepsilon^2} , \notag\\ \phi^2(s, y) &= -\frac{| t_\varepsilon - s |^2+| x_\varepsilon - y |^2}{2 \varepsilon^2}.
\end{align}

$V_1\left(t, x\right)-\phi^1\left(t, x\right)$ and $V_2\left(s,y\right)+\phi^2\left(s, y\right)$ attain local maxima at $\left(t_{\varepsilon}, x_{\varepsilon}\right)$ and $\left(s_{\varepsilon}, y_{\varepsilon}\right)$, respectively. 

By Def.~\ref{def1}, $V_1$ is a viscosity subsolution:
\begin{align}
\partial_t \phi^1 + H\left( \nabla_{x} \phi^1\right) + \nabla^\alpha_\sigma[\phi^1] \leq 0, \quad \text{at}~ (t_\varepsilon, x_\varepsilon). 
\end{align}
Similarly, by Def.~\ref{def2},  $V_2$ is a viscosity supersolution:
\begin{align}
\partial_s (-\phi^2) + H\left(\nabla_{y} (-\phi^2)\right) + \nabla^\alpha_\sigma[-\phi^2] \geq 0 , \quad \text{at}~ (s_\varepsilon, y_\varepsilon). 
\end{align}

Next, it is worth noting that 
\begin{equation}
\partial_t \phi^1(t_\varepsilon, x_\varepsilon) = \frac{t_\varepsilon - s_\varepsilon}{\varepsilon^2}, \nabla_{x} \phi^1(t_\varepsilon, x_\varepsilon) = \frac{x_\varepsilon - y_\varepsilon}{\varepsilon^2}.
\end{equation}
Analogous expressions hold for $\partial_s(-\phi^2)$ and $\nabla_y(-\phi^2)$. Using the Lipschitz property assumption \textbf{(A2)}, we have
\begin{align}  H\left(x_\varepsilon, \nabla_{x} \phi^1\right) - H\left(y_\varepsilon, \nabla_{y} (-\phi^2)\right) \rightarrow 0,\quad as ~\varepsilon\rightarrow 0.\end{align}

Since both $\phi^1$ and $-\phi^2$ are of ``penalized quadratic'' form around $\left(t_{\varepsilon}, x_{\varepsilon}\right)$ and $\left(s_{\varepsilon}, y_{\varepsilon}\right)$, the Riesz fractional derivatives of such functions are zero or bounded in sufficiently small neighborhoods, provided that they are polynomial-like away from those neighborhoods. By carefully choosing compactly supported (or sufficiently decaying) extensions of $\phi^1,-\phi^2$, one can show that this difference is small and vanishes as $\varepsilon \rightarrow$ 0. 

We note that the fractional terms satisfy
\begin{equation}
\left|\nabla_\sigma^\alpha\left[\phi^1\right]-\nabla_\sigma^\alpha\left[-\phi^2\right]\right| \rightarrow 0, \quad as~\varepsilon \rightarrow 0.
\end{equation}

Given the terminal condition $V_1(T, x) \leq V_2(T, x)$, a standard contradiction arises if $W_2>0$ is attained in the interior $\hat{t}<T$. Therefore, the only consistent resolution is $W_2 \leq 0$, which implies $V_1 \leq V_2$ on $[0, T] \times \mathbb{R}^d$.

\newpage

\end{document}